\newcounter{assumption}%
\renewcommand{\theassumption}{\arabic{assumption}}
\newtheorem{theorem}{Theorem}
\newtheorem{corollary}{Corollary}
\newtheorem{proposition}{Proposition}
\newtheorem{remark}{Remark}
\newtheorem{definition}{Definition}
\numberwithin{lemma}{section} 
\numberwithin{theorem}{section} 
\numberwithin{corollary}{section} 
\numberwithin{proposition}{section} 
\numberwithin{definition}{section} 
\numberwithin{example}{section} 
\numberwithin{question}{section} 
\DeclareMathOperator*{\argmin}{argmin}
\title{Information theory and discriminative sampling for model discovery}
\author{Yuxuan Bao$^*$, 
J. Nathan Kutz$^{*,\dag}$\\[.1in]$^*$Department of Applied Mathematics, University of Washington, Seattle, USA\\
$^\dag$Department of Electrical and Computer Engineering, University of Washington, Seattle, USA}
\begin{document}

\maketitle

\begin{abstract}
Fisher information and Shannon entropy are fundamental tools for understanding and analyzing dynamical systems from complementary perspectives. They can characterize unknown parameters by quantifying the information contained in variables, or measure how different initial trajectories or temporal segments of a trajectory contribute to learning or inferring system dynamics. In this work, we leverage the Fisher Information Matrix (FIM) within the data-driven framework of {\em sparse identification of nonlinear dynamics} (SINDy). We visualize information patterns in chaotic and non-chaotic systems for both single trajectories and multiple initial conditions, demonstrating how information-based analysis can improve sampling efficiency and enhance model performance by prioritizing more informative data. The benefits of statistical bagging are further elucidated through spectral analysis of the FIM. We also illustrate how Fisher information and entropy metrics can promote data efficiency in three scenarios: when only a single trajectory is available, when a tunable control parameter exists, and when multiple trajectories can be freely initialized. As data-driven model discovery continues to gain prominence, principled sampling strategies guided by quantifiable information metrics offer a powerful approach for improving learning efficiency and reducing data requirements.
\end{abstract}

\section{Introduction}

The data-driven discovery of underlying ordinary differential and partial differential equations (ODEs and PDEs) characterizing complex systems remains a grand challenge in modern science and engineering. In comparison with analytic approaches that often require prior knowledge or first principles of the underlying dynamical system, data-driven approaches are capable of extrapolating behavior directly from observational data. And among the numerous data-driven methods for model discovery, the {\em sparse identification of nonlinear dynamics} (SINDy) \cite{brunton2016discovering} has become increasingly popular for its ability to efficiently learn interpretable dynamical systems. SINDy utilizes a sparsity-promoting regression on a library of candidate functions, and fits measurement data with the fewest terms possible.  Thus a parsimonious model of the underlying dynamics is learned.  SINDy has evolved significantly since its inception, with various modifications of the algorithm aimed at reducing data-reliance, promoting noise-robustness, learning implicit functional forms \cite{mangan2017inferring}, accommodating rational function nonlinearities and control inputs \cite{kaheman2020sindy}, robustification using ensembles from subsample aggregation \cite{delahunt2022toolkit,reinbold2021robust} or a weak formulation~\cite{messenger2021weak}, and promoting stability by producing bounded trajectories~\cite{kaptanoglu2021promoting}. Recent efforts have also evolved beyond traditional bagging methods \cite{breiman1996bagging} to the utilization of ensemble statistics \cite{fasel2022ensemble}, thus further improving the robustness on noisy conditions with low data limit, which works especially well on spatio-temporal systems.  To date, such improvements have focused on improvements and generalization of the core SINDy algorithm itself, and their various contributions are included in the open source pySINDY package~\cite{kaptanoglu2021pysindy}. In contrast, this work shifts the focus from algorithmic refinements to the data itself. In particular, standard SINDy formulations implicitly assume that all observations contribute equally to model discovery. We demonstrate that this assumption is generally invalid: different segments of a temporal trajectory carry substantially different amounts of information about the underlying dynamics. This heterogeneity can be rigorously quantified using Fisher information and entropy–based metrics. As a consequence, distinct portions of a trajectory contribute unequally to the identification of governing ODE or PDE systems. Exploiting this observation enables more efficient data utilization, as accurate model identification can be achieved using only the most informative subsets of the data. Moreover, this perspective provides deeper physical insight into the dynamics by highlighting the specific temporal regimes that are most diagnostic of the underlying governing laws.





In this work, we experimentally demonstrate how different trajectories, and different temporal segments of a single trajectory, have distinct impacts on data-driven model performance. Over trials on both chaotic and non-chaotic systems with numerous initial conditions, we visualize their information patterns and propose FIM-based metrics that optimize sampling efficiency. To further demonstrate how sampling efficiency can be improved through information analysis, we provide concrete applications of Fisher information and entropy metrics for promoting data efficiency in three distinct scenarios: when only one trajectory is available, when a control parameter is available for tuning, and when multiple trajectories are available with freely chosen initial conditions. The efficacy of bootstrap aggregation (bagging) methods can also be explained through spectral analysis of the Fisher Information Matrix - as more trajectory samples are aggregated, the FIM spectrum becomes less skewed and the parameter space is constrained in more directions, thereby improving model performance. In model discovery problems, especially as systems become increasingly complex, the information gained from different initial conditions or different temporal segments of a trajectory can differ drastically.\cite{vasey2025influence} Such discrepancies are reflected in the FIM eigenvalue distribution and ultimately affect the recovery of governing equations.

\section{Preliminaries}\label{sec:preliminaries}

This section introduces the foundational concepts employed throughout this work: the Sparse Identification of Nonlinear Dynamics (SINDy) framework for data-driven model discovery, the Fisher Information Matrix (FIM) as a measure of parameter identifiability, and entropy-based metrics from the Shannon family for quantifying regularity in time series data.

\subsection{Sparse Identification of Nonlinear Dynamics (SINDy)}\label{subsec:sindy}

SINDy \cite{brunton2016discovering} is a data-driven framework developed to identify parsimonious models of nonlinear partial differential equations in the form
\begin{equation}\label{eq:pde_form}
    \mathbf{u}_t = \mathbf{F}(\boldsymbol{\theta}, \mathbf{u}, \mathbf{u}_x, \mathbf{u}_{xx}, \ldots, x),
\end{equation}
where $\mathbf{F}(\cdot)$ is a system of nonlinear functions of the state $\mathbf{u}$, its partial derivatives, and other parameters $\boldsymbol{\theta}$. The method utilizes sparse regression on measurement data to identify active terms in $\mathbf{F}$ from a library of linear and nonlinear candidates including partial derivatives.

Given input data $\mathbf{D} = [\mathbf{d}_1, \mathbf{d}_2, \ldots, \mathbf{d}_n]^\top$, the algorithm \cite{rudy2017data} computes a library of candidate terms, where derivatives are obtained using finite difference or interpolation methods depending on the noise level in measurements. This generates an evaluated library matrix
\begin{equation}\label{eq:library_matrix}
    \boldsymbol{\Theta}(\mathbf{D}) = \begin{bmatrix} \mathbf{1} & \mathbf{D} & \mathbf{D}^2 & \cdots & \mathbf{D}_x & \mathbf{D}_{xx} & \cdots & \mathbf{D}\mathbf{D}_x & \cdots \end{bmatrix}^\top.
\end{equation}
With the time derivative represented in vectorized form, the original PDE can be written as
\begin{equation}\label{eq:sindy_regression}
    \mathbf{D}_t = \boldsymbol{\Theta}(\mathbf{D})\boldsymbol{\Xi},
\end{equation}
where $\boldsymbol{\Xi}$ is sparse in most practical scenarios. The identification of parameters thus reduces to the optimization problem
\begin{equation}\label{eq:sindy_opt}
    \hat{\boldsymbol{\Xi}} = \argmin_{\boldsymbol{\Xi}} \frac{1}{2}\|\boldsymbol{\Theta}(\mathbf{D})\boldsymbol{\Xi} - \mathbf{D}_t\|_2^2 + \mathcal{R}(\boldsymbol{\Xi}),
\end{equation}
which can be solved by sparsity-promoting regression methods such as sequential thresholding ridge regression (STRidge). Rudy et al.\ \cite{rudy2017data} demonstrated that STRidge exhibits superior performance when data are highly correlated. To accommodate the amplifying effect of noise in high-order partial derivatives, integral terms and weak formulations \cite{schaeffer2017sparse, messenger2021weak, reinbold2020using} have been introduced to further enhance robustness. Recent efforts have incorporated weak-form estimates with latent space techniques to learn nonlinear lower-dimensional dynamics for high-dimensional data \cite{he2025physics}.

The SINDy framework has evolved significantly since its inception, with modifications aimed at reducing data reliance, promoting noise robustness, learning implicit functional forms \cite{mangan2017inferring}, accommodating rational function nonlinearities and control inputs \cite{kaheman2020sindy}, robustification using ensembles from subsample aggregation \cite{delahunt2022toolkit, reinbold2021robust} or weak formulations \cite{messenger2021weak}, and promoting stability by producing bounded trajectories \cite{kaptanoglu2021promoting}. Recent efforts have evolved beyond traditional bagging methods \cite{breiman1996bagging} to the utilization of ensemble statistics \cite{fasel2022ensemble}, further improving robustness under noisy conditions with limited data, which works especially well on spatio-temporal systems. These contributions are included in the open-source PySINDy package \cite{kaptanoglu2021pysindy}.

\subsection{Fisher Information Matrix (FIM)}\label{subsec:fim_prelim}

The Fisher Information Matrix (FIM) \cite{ly2017tutorial} provides a measure of the asymptotic variability of parameter estimators given the observed data, thereby quantifying the amount of information that the data carries about unknown parameters. It is defined as
\begin{equation}\label{eq:fim_def}
    I_{ij} = \mathbb{E}\left[\frac{\partial}{\partial \theta_i}\log p(\mathbf{X}; \boldsymbol{\theta}) \cdot \frac{\partial}{\partial \theta_j}\log p(\mathbf{X}; \boldsymbol{\theta})\right],
\end{equation}
which yields a positive semidefinite matrix characterizing the information content in the data $\mathbf{X}$ for the estimated parameters $\boldsymbol{\theta}$. When the largest eigenvalue of the FIM is large, the probability density function for $\mathbf{X}$ conditioned on $\boldsymbol{\theta}$ exhibits greater sensitivity to changes in $\boldsymbol{\theta}$, indicating that the data $\mathbf{X}$ carries more information about the parameters. The spectral distribution of the FIM is therefore fundamental in analyzing the quality of input data for parameter estimation.

Our study focuses on the landscape of the FIM spectrum, particularly the largest eigenvalues, within the SINDy framework. Geometrically, the eigenvalue pattern of sub-regions of input data indicates the model's proximity to the observed portions of data, providing insight for identifying more informative trajectory segments for learning the estimated parameters. Both chaotic and non-chaotic systems are considered across a range of noise levels and initial conditions.

\subsection{Entropies and Metrics}\label{subsec:entropies}

The use of entropies as measures of randomness has been widely applied in both stochastic processes and deterministic systems. Entropy metrics can quantify structural similarity between data blocks and even between different dynamical systems. They are particularly useful in chaotic settings where small measurement noise is greatly amplified over short time periods.

\subsubsection{The Kolmogorov--Sinai Entropy}\label{subsec:ks_entropy}

Drawn from the Shannon family, the Kolmogorov--Sinai (KS) entropy \cite{ott2002chaos} serves as a measure of uncertainty in dynamical systems. In the classical setting, both time and spatial grids are partitioned, and the KS entropy is defined as the limiting entropy when partition intervals approach zero:
\begin{equation}\label{eq:ks_entropy}
    h_{\mathrm{KS}}(\mathbf{X}, t) = -\lim_{\delta t \to 0} \lim_{\delta x \to 0} \lim_{k \to \infty} \frac{1}{k\delta t} \sum P(x_1, x_2, \ldots, x_k) \log P(x_1, x_2, \ldots, x_k).
\end{equation}
The KS entropy provides a criterion for defining chaos \cite{ott2002chaos}: a positive KS entropy serves as a clear indication of chaotic behavior in the system. Pesin's identity \cite{eckmann1985ergodic} further establishes a connection between the KS entropy and the Lyapunov exponents, showing that the KS entropy is bounded by the sum of all positive Lyapunov exponents.

\subsubsection{The Approximate Entropy (ApEn)}\label{subsec:apen}

The Approximate Entropy (ApEn), proposed by Pincus \cite{pincus1991approximate}, is a metric that measures spatial and temporal regularities in data series. Heuristically, it quantifies the correlation between data patterns: low ApEn suggests that a system is predictive in its repeating patterns and oscillatory frequencies, while high ApEn indicates independence in the data series, emergence of new patterns, and lower predictability from current observations.

Given a sequence of length $N$, a template length $m$, and a distance threshold $r$, the approximate entropy is calculated as
\begin{equation}\label{eq:apen}
    \mathrm{ApEn}(m, r, N) = \phi^m(r) - \phi^{m+1}(r),
\end{equation}
where
\begin{equation}\label{eq:phi_m}
    \phi^m(r) = \frac{1}{N - m + 1} \sum_{i=1}^{N-m+1} \log C_i^m(r),
\end{equation}
and $C_i^m(r)$ is the proportion of template vectors $\mathbf{u}_m(j) = (u_j, u_{j+1}, \ldots, u_{j+m-1})^\top$ from all $N - m + 1$ choices of $j$ that lie within radius $r$ of the template vector $\mathbf{u}_m(i) = (u_i, u_{i+1}, \ldots, u_{i+m-1})^\top$ under the $\ell^\infty$ norm.

Despite its utility in reflecting data series regularity, ApEn computation is constrained by the fixed template length $m$ and distance threshold $r$. The metric can be highly sensitive to changes in $r$. While ApEn performs well for oscillatory data with consistent frequencies, dynamical systems with varying behaviors require different optimal choices of $m$ and $r$ at different stages, increasing the complexity of hyperparameter tuning.

Moreover, ApEn is a biased metric compared to sample entropy \cite{yentes2013appropriate}. To ensure the logarithm is well-defined, the probability calculation includes the point $\mathbf{u}_m(i)$ itself in computing $C_i^m(r)$, rather than using conditional probability. This bias is amplified by the logarithm, especially when $C_i^m(r)$ is small, corresponding to cases where few or no matches are found for the template vector.

\subsubsection{The Sample Entropy (SampEn)}\label{subsec:sampen}

Unlike ApEn, which suffers from self-comparison bias, the Sample Entropy (SampEn) proposed by Richman and Moorman \cite{richman2000physiological} eliminates this bias by excluding self-matches from the probability estimates.

Given a sequence of length $N$, template length $m$, and distance threshold $r$, the sample entropy is computed by first defining two similarity measures at lengths $m$ and $m+1$:
\begin{align}
    B^m(r) &= \frac{1}{N-m} \sum_{i=1}^{N-m} B_i^m(r) = \frac{1}{N-m} \cdot \frac{1}{N-m-1} \sum_{i=1}^{N-m} \sum_{\substack{j=1 \\ j \neq i}}^{N-m} \mathbf{1}_{d(\mathbf{u}_m(i), \mathbf{u}_m(j)) \leq r}, \label{eq:Bm}\\
    A^m(r) &= \frac{1}{N-m} \sum_{i=1}^{N-m} A_i^m(r) = \frac{1}{N-m} \cdot \frac{1}{N-m-1} \sum_{i=1}^{N-m} \sum_{\substack{j=1 \\ j \neq i}}^{N-m} \mathbf{1}_{d(\mathbf{u}_{m+1}(i), \mathbf{u}_{m+1}(j)) \leq r}, \label{eq:Am}
\end{align}
where $d(\cdot, \cdot)$ denotes the $\ell^\infty$ distance. The sample entropy is then
\begin{equation}\label{eq:sampen}
    \mathrm{SampEn}(m, r, N) = -\log \frac{A^m(r)}{B^m(r)}.
\end{equation}

Both ApEn and SampEn efficiently quantify regularity in dynamical systems, and they converge to the same value as $N \to \infty$. However, SampEn is generally considered superior for the following reasons. First, ApEn includes self-comparison bias \cite{yentes2013appropriate}, which is especially significant when few close matching vectors exist. Second, by averaging over $N - m$ templates, SampEn is statistically independent of data length, whereas ApEn is not.

A limitation shared by both methods is their dependence on the template length $m$ and distance threshold $r$; both entropies can be quite sensitive to these hyperparameters, particularly as a dynamical system becomes increasingly chaotic.

\subsubsection{Entropy Search}\label{subsec:entropy_search}

The use of entropies is common in optimization tasks, with Bayesian optimization being one of the most powerful techniques. Bayesian optimization comprises two key components: a Bayesian surrogate model that makes stochastic approximations from previous evaluations of the objective function, and an acquisition function that efficiently searches for the next evaluation point. These ingredients correspond to two core criteria in optimization: the quality of the predictive model is crucial for robustness to perturbations, and the acquisition function is important for sampling efficiency, especially when evaluations are costly.

In recent years, entropy-based acquisition functions have emerged from traditional data selection methods, selecting evaluation points that maximize information gain or, equivalently, minimize uncertainty about model parameters. In 2012, Hennig and Schuler \cite{hennig2012entropy} proposed the first entropy search acquisition function that maximizes information about the function maximum given new data:
\begin{equation}\label{eq:entropy_search}
    \alpha_{\mathrm{ES}}(x) = H[p(x^* | \mathcal{D})] - \mathbb{E}_{p(y|x,\mathcal{D})}[H[p(x^* | \mathcal{D} \cup \{(x, y)\})]],
\end{equation}
where $H[\cdot]$ denotes the differential entropy, $x^*$ is the optimum location, and $\mathcal{D}$ represents the current data. The first term is the entropy of the posterior over the optimum based on current data (a constant for any incoming point), while the second term accounts for the expected reduction in entropy after observing new data $(x, y)$.

Computing both terms proved challenging until Hern\'andez-Lobato et al.\ \cite{hernandez2016predictive} reformulated the entropy search into an equivalent but more tractable form, termed Predictive Entropy Search (PES):
\begin{equation}\label{eq:pes}
    \alpha_{\mathrm{PES}}(x) = H[p(y | x, \mathcal{D})] - \mathbb{E}_{p(x^*|\mathcal{D})}[H[p(y | x, x^*, \mathcal{D})]].
\end{equation}
Here, the first term is the posterior predictive entropy, which can be computed analytically when a Gaussian process prior is used. The entropy in the second term, conditioned on the optimum, can be efficiently approximated using the expectation propagation method \cite{minka2013expectation}.

The core idea of entropy search is to enhance sampling efficiency by maximizing information gain about the optimum. This raises a natural question: how does more informative data compare to less informative data in terms of model discovery? In sections below, we will visualize and quantify such discrepancies using FIM-based metrics.

\section{Information Analysis Based on the Fisher Information Matrix}\label{sec:fim_analysis}

In this section, we develop a rigorous framework for quantifying the informativeness of trajectory data within the context of sparse identification of nonlinear dynamics (SINDy). We establish the Fisher Information Matrix (FIM) as a principled metric for assessing data quality and demonstrate how its spectral properties directly relate to parameter estimation accuracy. Furthermore, we provide theoretical justification for the efficacy of statistical bagging methods through the lens of information geometry.

\subsection{Problem Formulation and the Fisher Information Matrix}

Consider a dynamical system governed by the ordinary differential equation
\begin{equation}\label{eq:ode_system}
    \dot{\mathbf{x}} = \mathbf{f}(\mathbf{x}; \boldsymbol{\theta}),
\end{equation}
where $\mathbf{x} \in \mathbb{R}^n$ denotes the state vector and $\boldsymbol{\theta} \in \mathbb{R}^p$ represents the unknown parameters to be identified. Within the SINDy framework, we seek a sparse representation
\begin{equation}\label{eq:sindy_form}
    \dot{\mathbf{X}} = \boldsymbol{\Theta}(\mathbf{X}) \boldsymbol{\Xi},
\end{equation}
where $\mathbf{X} \in \mathbb{R}^{m \times n}$ is the data matrix of $m$ temporal measurements, $\boldsymbol{\Theta}(\mathbf{X}) \in \mathbb{R}^{m \times q}$ is the library matrix of $q$ candidate nonlinear functions evaluated on the data, and $\boldsymbol{\Xi} \in \mathbb{R}^{q \times n}$ is the sparse coefficient matrix encoding the governing dynamics.

The SINDy optimization problem can be formulated as
\begin{equation}\label{eq:sindy_optimization}
    \hat{\boldsymbol{\Xi}} = \argmin_{\boldsymbol{\Xi}} \frac{1}{2} \|\boldsymbol{\Theta}(\mathbf{X})\boldsymbol{\Xi} - \dot{\mathbf{X}}\|_F^2 + \mathcal{R}(\boldsymbol{\Xi}),
\end{equation}
where $\|\cdot\|_F$ denotes the Frobenius norm and $\mathcal{R}(\boldsymbol{\Xi})$ is a sparsity-promoting regularizer. For a single state variable, this reduces to the linear regression model
\begin{equation}\label{eq:linear_model}
    \mathbf{y} = \mathbf{A}\boldsymbol{\xi} + \boldsymbol{\varepsilon},
\end{equation}
where $\mathbf{y} = \dot{\mathbf{x}} \in \mathbb{R}^m$ is the time derivative, $\mathbf{A} = \boldsymbol{\Theta}(\mathbf{X}) \in \mathbb{R}^{m \times q}$ is the library matrix, $\boldsymbol{\xi} \in \mathbb{R}^q$ is the coefficient vector, and $\boldsymbol{\varepsilon} \sim \mathcal{N}(\mathbf{0}, \sigma^2 \mathbf{I}_m)$ represents measurement noise.

Under the Gaussian noise assumption, the log-likelihood function for observing $\mathbf{y}$ given parameters $\boldsymbol{\xi}$ is
\begin{equation}\label{eq:log_likelihood}
    \ell(\boldsymbol{\xi}; \mathbf{y}, \mathbf{A}) = -\frac{1}{2\sigma^2}\|\mathbf{A}\boldsymbol{\xi} - \mathbf{y}\|_2^2 - \frac{m}{2}\log(2\pi\sigma^2).
\end{equation}

The Fisher Information Matrix (FIM) quantifies the amount of information that the observed data carries about the unknown parameters. For the model in Equation~\eqref{eq:linear_model}, the FIM is defined as the negative expected Hessian of the log-likelihood:
\begin{equation}\label{eq:fim_definition}
    \mathbf{I}(\boldsymbol{\xi}) = -\mathbb{E}\left[\nabla_{\boldsymbol{\xi}}^2 \ell(\boldsymbol{\xi}; \mathbf{y}, \mathbf{A})\right] = -\mathbb{E}\left[\frac{\partial^2 \ell}{\partial \xi_i \partial \xi_j}\right]_{i,j=1}^{q}.
\end{equation}

\begin{proposition}[Fisher Information Matrix for Linear Regression]\label{prop:fim_linear}
For the linear model $\mathbf{y} = \mathbf{A}\boldsymbol{\xi} + \boldsymbol{\varepsilon}$ with $\boldsymbol{\varepsilon} \sim \mathcal{N}(\mathbf{0}, \sigma^2\mathbf{I})$, the Fisher Information Matrix is given by
\begin{equation}\label{eq:fim_formula}
    \mathbf{I}(\boldsymbol{\xi}) = \frac{1}{\sigma^2}\mathbf{A}^\top\mathbf{A}.
\end{equation}
\end{proposition}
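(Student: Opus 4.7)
The plan is to apply the definition of the Fisher Information Matrix from Equation~\eqref{eq:fim_definition} directly to the Gaussian log-likelihood in Equation~\eqref{eq:log_likelihood}. Because the noise $\boldsymbol{\varepsilon}$ enters the log-likelihood only through a quadratic term in $\boldsymbol{\xi}$, the Hessian with respect to $\boldsymbol{\xi}$ turns out to be deterministic, which reduces the expectation step to a triviality. The entire argument is therefore a short matrix-calculus computation.

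First I would expand the squared residual as $\|\mathbf{A}\boldsymbol{\xi} - \mathbf{y}\|_2^2 = \boldsymbol{\xi}^\top \mathbf{A}^\top \mathbf{A}\boldsymbol{\xi} - 2\mathbf{y}^\top \mathbf{A}\boldsymbol{\xi} + \mathbf{y}^\top\mathbf{y}$, making the $\boldsymbol{\xi}$-dependence of $\ell$ explicit and isolating the constant $-\tfrac{m}{2}\log(2\pi\sigma^2)$ which is killed by any $\boldsymbol{\xi}$-derivative. Differentiating once gives the score function $\nabla_{\boldsymbol{\xi}} \ell = \tfrac{1}{\sigma^2} \mathbf{A}^\top (\mathbf{y} - \mathbf{A}\boldsymbol{\xi})$, and differentiating a second time yields $\nabla^2_{\boldsymbol{\xi}} \ell = -\tfrac{1}{\sigma^2} \mathbf{A}^\top \mathbf{A}$, which is independent of $\mathbf{y}$. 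Substituting into Equation~\eqref{eq:fim_definition} gives $\mathbf{I}(\boldsymbol{\xi}) = -\mathbb{E}[\nabla^2_{\boldsymbol{\xi}}\ell] = \tfrac{1}{\sigma^2}\mathbf{A}^\top\mathbf{A}$, as claimed.

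As a consistency check I would rederive the result from the equivalent score-covariance form $\mathbf{I}(\boldsymbol{\xi}) = \mathbb{E}[(\nabla_{\boldsymbol{\xi}} \ell)(\nabla_{\boldsymbol{\xi}} \ell)^\top]$. Plugging in the score and using $\mathbb{E}[(\mathbf{y}-\mathbf{A}\boldsymbol{\xi})(\mathbf{y}-\mathbf{A}\boldsymbol{\xi})^\top] = \sigma^2 \mathbf{I}_m$ collapses the expression to $\tfrac{1}{\sigma^4}\mathbf{A}^\top (\sigma^2\mathbf{I}_m)\mathbf{A} = \tfrac{1}{\sigma^2}\mathbf{A}^\top\mathbf{A}$, matching the Hessian-based calculation and confirming the standard information-matrix identity under Gaussian noise.

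There is no substantive obstacle here; the computation is textbook and the only point worth stating explicitly is the convention that $\mathbf{A}$ is treated as a deterministic design matrix, which is appropriate in the SINDy setting because the library $\boldsymbol{\Theta}(\mathbf{X})$ is regarded as fixed conditional on the observed trajectory, so that the randomness in the likelihood lives entirely in $\boldsymbol{\varepsilon}$. If anything, the only care needed is to be consistent about this conditioning when later invoking the result to analyze FIM spectra and bagging, since aggregating across multiple trajectories changes the effective $\mathbf{A}$ and hence the information matrix in the obvious additive way.
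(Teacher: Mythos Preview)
Your proposal is correct and follows essentially the same route as the paper's proof: compute the gradient, then the Hessian, observe that the Hessian is deterministic so the expectation is trivial, and conclude. Your additional score-covariance consistency check and the explicit remark about treating $\mathbf{A}$ as a fixed design matrix are sound elaborations but go beyond what the paper includes.
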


\begin{proof}
The gradient of the log-likelihood with respect to $\boldsymbol{\xi}$ is
\begin{equation}
    \nabla_{\boldsymbol{\xi}} \ell = \frac{1}{\sigma^2}\mathbf{A}^\top(\mathbf{y} - \mathbf{A}\boldsymbol{\xi}).
\end{equation}
The Hessian is therefore
\begin{equation}
    \nabla_{\boldsymbol{\xi}}^2 \ell = -\frac{1}{\sigma^2}\mathbf{A}^\top\mathbf{A},
\end{equation}
which is constant with respect to $\mathbf{y}$. Thus, $\mathbf{I}(\boldsymbol{\xi}) = -\nabla_{\boldsymbol{\xi}}^2 \ell = \frac{1}{\sigma^2}\mathbf{A}^\top\mathbf{A}$.
\end{proof}

\begin{remark}
The FIM $\mathbf{I}(\boldsymbol{\xi})$ is independent of the true parameter value $\boldsymbol{\xi}$, which is a characteristic property of linear Gaussian models. This independence simplifies the analysis considerably, as the informativeness of data can be assessed without knowledge of the ground truth parameters.
\end{remark}

\subsection{Spectral Interpretation and Information Metrics}

The eigenvalue decomposition of the FIM provides geometric insight into the parameter estimation landscape. Let
\begin{equation}\label{eq:fim_eigendecomp}
    \mathbf{I}(\boldsymbol{\xi}) = \mathbf{V}\boldsymbol{\Lambda}\mathbf{V}^\top = \sum_{k=1}^{q} \lambda_k \mathbf{v}_k \mathbf{v}_k^\top,
\end{equation}
where $\lambda_1 \geq \lambda_2 \geq \cdots \geq \lambda_q \geq 0$ are the eigenvalues and $\mathbf{v}_1, \ldots, \mathbf{v}_q$ are the corresponding orthonormal eigenvectors.

The eigenvalues admit a direct interpretation via the singular value decomposition of $\mathbf{A}$. If $\mathbf{A} = \mathbf{U}\boldsymbol{\Sigma}\mathbf{V}^\top$ with singular values $s_1 \geq s_2 \geq \cdots \geq s_{\min(m,q)} \geq 0$, then
\begin{equation}\label{eq:eigenvalue_singular}
    \lambda_k = \frac{s_k^2}{\sigma^2}, \quad k = 1, \ldots, q.
\end{equation}

\begin{definition}[Directional Information]
The information content along a direction $u \in \mathbb{R}^q$ with $\|u\|_2 = 1$ is defined via the Rayleigh quotient~\cite{horn2012matrix}
\begin{equation}\label{eq:directional_info}
    \mathcal{I}_{\mathbf{u}} = \mathbf{u}^\top \mathbf{I}(\boldsymbol{\xi}) \mathbf{u} = \frac{1}{\sigma^2}\|\mathbf{A}\mathbf{u}\|_2^2.
\end{equation}
\end{definition}

This quantity measures how sensitively the likelihood function responds to perturbations of the parameter vector along direction $\mathbf{u}$. A large value indicates that the data strongly constrains parameter variations in that direction, while a small value suggests parameter uncertainty.

\begin{proposition}[Extremal Information Directions]\label{prop:extremal_directions}
The maximum and minimum directional information are achieved along the principal eigenvectors:
\begin{align}
    \max_{\|\mathbf{u}\|=1} \mathcal{I}_{\mathbf{u}} &= \lambda_1 = \lambda_{\max}(\mathbf{I}), \quad \text{achieved at } \mathbf{u} = \mathbf{v}_1, \label{eq:max_info}\\
    \min_{\|\mathbf{u}\|=1} \mathcal{I}_{\mathbf{u}} &= \lambda_q = \lambda_{\min}(\mathbf{I}), \quad \text{achieved at } \mathbf{u} = \mathbf{v}_q. \label{eq:min_info}
\end{align}
\end{proposition}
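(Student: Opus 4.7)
The plan is to prove this by a direct application of the Rayleigh--Ritz characterization using the eigendecomposition of $\mathbf{I}(\boldsymbol{\xi})$ already recorded in \eqref{eq:fim_eigendecomp}. Since $\mathbf{I}(\boldsymbol{\xi}) = \frac{1}{\sigma^2}\mathbf{A}^\top\mathbf{A}$ is symmetric positive semidefinite by Proposition~\ref{prop:fim_linear}, its eigenvectors $\{\mathbf{v}_k\}_{k=1}^q$ form an orthonormal basis of $\mathbb{R}^q$, which is exactly the structural fact the argument will exploit.

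The first step is to expand an arbitrary unit vector in the eigenbasis: write $\mathbf{u} = \sum_{k=1}^q c_k \mathbf{v}_k$ with $c_k = \mathbf{v}_k^\top \mathbf{u}$, so that $\|\mathbf{u}\|_2^2 = \sum_k c_k^2 = 1$ by Parseval/orthonormality. The second step is to substitute this expansion into the quadratic form and use $\mathbf{I}(\boldsymbol{\xi})\mathbf{v}_k = \lambda_k \mathbf{v}_k$ together with $\mathbf{v}_j^\top \mathbf{v}_k = \delta_{jk}$, yielding
\begin{equation}
    \mathcal{I}_{\mathbf{u}} = \mathbf{u}^\top \mathbf{I}(\boldsymbol{\xi}) \mathbf{u} = \sum_{k=1}^q \lambda_k c_k^2.
\end{equation}
This expresses the directional information as a convex combination of the eigenvalues, with weights $c_k^2 \geq 0$ summing to one.

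The third step is to apply the obvious two-sided bound arising from the ordering $\lambda_q \leq \lambda_k \leq \lambda_1$:
\begin{equation}
    \lambda_q = \lambda_q \sum_{k=1}^q c_k^2 \leq \sum_{k=1}^q \lambda_k c_k^2 \leq \lambda_1 \sum_{k=1}^q c_k^2 = \lambda_1,
\end{equation}
which yields the inequalities in \eqref{eq:max_info}--\eqref{eq:min_info}. Finally, I would verify that these bounds are attained by plugging in $\mathbf{u} = \mathbf{v}_1$ (resp.\ $\mathbf{u} = \mathbf{v}_q$), for which $\mathbf{u}^\top \mathbf{I}(\boldsymbol{\xi}) \mathbf{u} = \mathbf{v}_1^\top \lambda_1 \mathbf{v}_1 = \lambda_1$ (resp.\ $\lambda_q$), completing the argument.

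There is no real obstacle here: this is a textbook Rayleigh--Ritz argument and the only thing to be careful about is bookkeeping of the eigenbasis expansion. If one wishes to emphasize the SINDy-specific interpretation, one could additionally remark via \eqref{eq:eigenvalue_singular} that the extremal values equal $s_1^2/\sigma^2$ and $s_q^2/\sigma^2$, where the minimum vanishes precisely when $\mathbf{A}$ has rank deficit, i.e., when the library columns fail to span all $q$ parameter directions. That observation motivates why the smallest eigenvalue quantifies unidentifiable directions, but it is not needed for the proof itself.
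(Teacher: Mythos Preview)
Your proof is correct and follows exactly the same approach as the paper, which simply states that ``this result follows directly from the Rayleigh quotient characterization of eigenvalues'' without further elaboration. You have merely spelled out the standard Rayleigh--Ritz argument in detail, which is entirely appropriate and matches the paper's one-line justification.
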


This result follows directly from the Rayleigh quotient characterization of eigenvalues. The eigenvector $\mathbf{v}_1$ corresponds to the most \emph{informative} direction in parameter space - the direction along which the data most strongly constrains the parameters - while $\mathbf{v}_q$ corresponds to the least informative direction.

To provide a scalar summary of data informativeness, we propose the following metrics derived from the FIM spectrum:

\begin{definition}[FIM-Based Information Metrics]\label{def:fim_metrics}
Given a Fisher Information Matrix $\mathbf{I}$ with eigenvalues $\lambda_1 \geq \cdots \geq \lambda_q$, we define:
\begin{enumerate}[(i)]
    \item \textbf{Maximum eigenvalue metric:} $\mathcal{M}_{\max} = \lambda_1$, capturing the strongest constraint direction.
    
    \item \textbf{Trace metric (A-optimality):} $\mathcal{M}_{\mathrm{tr}} = \mathrm{tr}(\mathbf{I}) = \sum_{k=1}^q \lambda_k$, measuring total information.
    
    \item \textbf{Determinant metric (D-optimality):} $\mathcal{M}_{\det} = \det(\mathbf{I}) = \prod_{k=1}^q \lambda_k$, quantifying the volume of the confidence ellipsoid.
    
    \item \textbf{Minimum eigenvalue metric (E-optimality):} $\mathcal{M}_{\min} = \lambda_q$, capturing the weakest constraint direction.
    
    \item \textbf{Condition number:} $\kappa(\mathbf{I}) = \lambda_1/\lambda_q$, assessing the spectral spread of information across directions.
    
    \item \textbf{Spectral skewness:} 
    \begin{equation}\label{eq:spectral_skewness}
        \mathcal{S} = \frac{\frac{1}{q}\sum_{k=1}^q (\lambda_k - \bar{\lambda})^3}{\left(\frac{1}{q}\sum_{k=1}^q (\lambda_k - \bar{\lambda})^2\right)^{3/2}},
    \end{equation}
    where $\bar{\lambda} = \frac{1}{q}\sum_k \lambda_k$, characterizing the asymmetry of information distribution.
    
    \item \textbf{Effective rank:}
    \begin{equation}\label{eq:effective_rank}
        r_{\mathrm{eff}}(\mathbf{I}) = \exp\left(-\sum_{k=1}^q \tilde{\lambda}_k \log \tilde{\lambda}_k\right), \quad \tilde{\lambda}_k = \frac{\lambda_k}{\sum_{j=1}^q \lambda_j},
    \end{equation}
    measuring the effective number of non-negligible eigenvalues via the exponential of Shannon entropy.
\end{enumerate}
\end{definition}

Table~\ref{tab:scalar_metrics} summarizes these metrics and their interpretations in the context of experimental design and model identification.

\begin{table}[htbp]
\centering
\renewcommand{\arraystretch}{1.3}
\begin{tabular}{@{}>{\centering\arraybackslash}p{3.2cm}>{\centering\arraybackslash}p{3.5cm}>{\centering\arraybackslash}p{4.0cm}>{\centering\arraybackslash}p{2.8cm}@{}}
\toprule
\textbf{Metric} & \textbf{Formula} & \textbf{Interpretation} & \textbf{Optimization Goal} \\
\midrule
Trace (A-opt) & $\mathrm{tr}(\mathbf{I})$ & Total information & Maximize \\
Determinant (D-opt) & $\det(\mathbf{I})$ & Confidence volume$^{-1}$ & Maximize \\
Max.\ eigenvalue & $\lambda_1 = \lambda_{\max}(\mathbf{I})$ & Strongest constraint direction & Maximize \\
Min.\ eigenvalue (E-opt) & $\lambda_{\min}(\mathbf{I})$ & Weakest constraint direction & Maximize \\
Condition number & $\kappa(\mathbf{I}) = \lambda_1/\lambda_q$ & Spectral spread & Minimize \\
Effective rank & $r_{\mathrm{eff}}(\mathbf{I})$ & Constrained dimensions & Context-dependent$^{\dagger}$ \\
Spectral skewness & $\mathcal{S}$ & Eigenvalue asymmetry & Context-dependent$^{\dagger}$ \\
\bottomrule
\end{tabular}
\\ [2pt]
{\small $^{\dagger}$Meaningful only when combined with other metrics such as $\lambda_1$ or $\mathrm{tr}(\mathbf{I})$.}
\caption{Summary of FIM metrics for assessing data informativeness in SINDy}
\label{tab:scalar_metrics}
\end{table}

\subsection{Consequences of ill-conditioned FIM for SINDy}\label{subsec:fim_pathologies}

The spectral properties of the Fisher Information Matrix have direct and consequential implications for the performance of SINDy. When the FIM is poorly conditioned (large $\kappa(\mathbf{I})$) or spectrally sparse (small $r_{\mathrm{eff}}(\mathbf{I})$), several pathological behaviors emerge that compromise model discovery.

\subsubsection{Parameter Non-Identifiability}

When one or more eigenvalues $\lambda_k \approx 0$, the corresponding parameter directions $\mathbf{v}_k$ become practically non-identifiable: the data provides negligible information to distinguish among parameter values along these directions. Formally, the Cram\'er-Rao bound implies
\begin{equation}\label{eq:variance_blowup}
    \mathrm{Var}(\hat{\xi}_k) \geq \lambda_k^{-1} \to \infty \quad \text{as} \quad \lambda_k \to 0,
\end{equation}
where $\hat{\xi}_k = \mathbf{v}_k^\top \hat{\boldsymbol{\xi}}$ is the estimated coefficient projected onto direction $\mathbf{v}_k$.

In the SINDy context, this manifests as coefficient ambiguity, where multiple coefficient vectors $\boldsymbol{\xi}$ yield nearly identical fits to the data since variations along the null directions of $\mathbf{I}$ are undetectable. Additionally, the solution becomes dominated by the regularizer $\mathcal{R}(\boldsymbol{\xi})$ rather than the data, making results sensitive to hyperparameter choices - a phenomenon we term regularization dependence.

\subsubsection{Spurious Term Selection}

A spectrally sparse FIM creates conditions favorable for false discoveries - the erroneous inclusion of inactive library terms or exclusion of active terms. Consider two library terms with true coefficients $\xi_i \neq 0$ (active) and $\xi_j = 0$ (inactive). If the directions distinguishing these terms lie in the low-eigenvalue subspace of $\mathbf{I}$, then noise perturbations can cause $|\hat{\xi}_j| > |\hat{\xi}_i|$, leading to incorrect sparsity patterns. Moreover, the sequential thresholding procedure in SINDy may prematurely eliminate the true term $\xi_i$ while retaining the spurious term $\xi_j$.

The probability of such misidentification scales with the condition number:
\begin{equation}\label{eq:misid_prob}
    \mathbb{P}(\text{spurious selection}) \propto \kappa(\mathbf{I}) \cdot \frac{\sigma}{\min_{k:\xi_k \neq 0}|\xi_k|},
\end{equation}
where the ratio $\sigma/\min_k|\xi_k|$ represents the signal-to-noise ratio for the weakest active term.

\subsubsection{Numerical Instability}

The least-squares solution underlying SINDy involves computing
\begin{equation}\label{eq:normal_equations}
    \hat{\boldsymbol{\xi}} = (\mathbf{A}^\top\mathbf{A})^{-1}\mathbf{A}^\top\mathbf{y} = \sigma^2 \mathbf{I}^{-1}\mathbf{A}^\top\mathbf{y}.
\end{equation}
When $\mathbf{I}$ is ill-conditioned, the matrix inversion amplifies numerical errors:
\begin{equation}\label{eq:numerical_error}
    \frac{\|\delta\hat{\boldsymbol{\xi}}\|}{\|\hat{\boldsymbol{\xi}}\|} \leq \kappa(\mathbf{I}) \cdot \frac{\|\delta\mathbf{y}\|}{\|\mathbf{y}\|},
\end{equation}
where $\delta\mathbf{y}$ represents measurement noise or numerical round-off. For $\kappa(\mathbf{I}) \gg 1$, even small data perturbations produce large coefficient variations, undermining reproducibility.

\subsubsection{Multicollinearity in the Library}

High condition numbers often indicate multicollinearity - near-linear dependencies among library columns. In dynamical systems, this arises naturally when the trajectory remains confined to a low-dimensional submanifold of state space, when certain library terms (e.g., $x^2$ and $xy$) are highly correlated along the observed trajectory, or when the sampling rate is mismatched to the system timescales.

Multicollinearity causes the FIM to have a sloppy spectrum \cite{transtrum2015perspective}: a hierarchy of eigenvalues spanning many orders of magnitude. The ``stiff'' directions (large $\lambda_k$) correspond to well-constrained parameter combinations, while ``sloppy'' directions (small $\lambda_k$) represent parameter trade-offs that leave the fit essentially unchanged.

\subsubsection{Remediation Strategies}

The pathologies above motivate several remediation strategies, each interpretable through the FIM framework. Discriminative sampling - selecting trajectory segments or initial conditions that maximize $\lambda_{\min}(\mathbf{I})$ or minimize $\kappa(\mathbf{I})$ - is developed in subsequent sections as a primary focus of this work. Library pruning removes highly correlated library terms to reduce multicollinearity, improving the condition number at the cost of model expressiveness. The sequential thresholded least-squares (STLS) regularization employed in SINDy iteratively removes small coefficients, effectively projecting onto well-constrained subspaces; this sparsity-promoting approach implicitly avoids the poorly conditioned directions by eliminating terms whose coefficients fall below the threshold. Finally, ensemble methods such as bootstrap aggregation diversify the spectral support by averaging over multiple resampled datasets, as analyzed in Section~\ref{subsec:bagging}.

\begin{proposition}[Connection to Estimator Variance]\label{prop:cramer_rao}
Under standard regularity conditions, the Cram\'er-Rao lower bound establishes that any unbiased estimator $\hat{\boldsymbol{\xi}}$ satisfies
\begin{equation}\label{eq:cramer_rao}
    \mathrm{Cov}(\hat{\boldsymbol{\xi}}) \succeq \mathbf{I}(\boldsymbol{\xi})^{-1},
\end{equation}
where $\mathbf{A} \succeq \mathbf{B}$ denotes that $\mathbf{A} - \mathbf{B}$ is positive semidefinite. Consequently:
\begin{enumerate}[(i)]
    \item The variance along eigenvector $\mathbf{v}_k$ is bounded below by $\lambda_k^{-1}$.
    \item Small eigenvalues correspond to directions of high parameter uncertainty.
    \item If $\lambda_q = 0$, the parameters are not identifiable from the data.
\end{enumerate}
\end{proposition}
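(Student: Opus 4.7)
The plan is to first establish the matrix inequality~\eqref{eq:cramer_rao} via the classical score function argument, and then extract the three corollaries by projecting onto the eigenvectors of $\mathbf{I}(\boldsymbol{\xi})$. Because the paper has already computed the FIM for the linear Gaussian model in Proposition~3.1, nothing here requires a case-specific derivation; the work is almost entirely spectral and bookkeeping.

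For the matrix inequality, I would let $s(\boldsymbol{\xi}) = \nabla_{\boldsymbol{\xi}}\ell(\boldsymbol{\xi};\mathbf{y},\mathbf{A})$ denote the score. Standard regularity (dominated convergence to interchange $\nabla_{\boldsymbol{\xi}}$ with $\int \cdot \,p(\mathbf{y};\boldsymbol{\xi})\,d\mathbf{y}$) gives $\mathbb{E}[s] = 0$ and $\mathrm{Cov}(s) = \mathbf{I}(\boldsymbol{\xi})$. Unbiasedness of $\hat{\boldsymbol{\xi}}$ together with the same differentiation-under-the-integral step yields the cross-covariance identity $\mathbb{E}\!\left[(\hat{\boldsymbol{\xi}} - \boldsymbol{\xi})\,s(\boldsymbol{\xi})^\top\right] = \mathbf{I}_q$. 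Then the block covariance matrix of the stacked vector $(\hat{\boldsymbol{\xi}} - \boldsymbol{\xi},\, s)$ is positive semidefinite; taking its Schur complement with respect to the lower-right block $\mathbf{I}(\boldsymbol{\xi})$ (assumed invertible for the moment) produces $\mathrm{Cov}(\hat{\boldsymbol{\xi}}) - \mathbf{I}(\boldsymbol{\xi})^{-1} \succeq 0$, which is~\eqref{eq:cramer_rao}.

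For consequences (i) and (ii), I would use the eigendecomposition~\eqref{eq:fim_eigendecomp}. Since $\mathbf{I}(\boldsymbol{\xi})^{-1} = \sum_k \lambda_k^{-1}\mathbf{v}_k\mathbf{v}_k^\top$, pairing the matrix inequality with the unit vector $\mathbf{v}_k$ gives
\begin{equation*}
\mathrm{Var}(\mathbf{v}_k^\top\hat{\boldsymbol{\xi}}) = \mathbf{v}_k^\top\mathrm{Cov}(\hat{\boldsymbol{\xi}})\mathbf{v}_k \;\geq\; \mathbf{v}_k^\top\mathbf{I}(\boldsymbol{\xi})^{-1}\mathbf{v}_k = \lambda_k^{-1},
\end{equation*}
which is exactly (i), and (ii) follows by letting $\lambda_k \downarrow 0$.

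The main obstacle is the degenerate case (iii), where $\lambda_q = 0$ makes $\mathbf{I}(\boldsymbol{\xi})^{-1}$ undefined and the Schur complement step breaks down. My plan is to bypass the CR bound entirely here and argue directly from Proposition~3.1: $\lambda_q = 0$ means $\mathbf{v}_q^\top\mathbf{A}^\top\mathbf{A}\mathbf{v}_q = 0$, hence $\mathbf{A}\mathbf{v}_q = 0$. Substituting into the log-likelihood~\eqref{eq:log_likelihood} shows that $\ell(\boldsymbol{\xi} + t\mathbf{v}_q;\mathbf{y},\mathbf{A}) = \ell(\boldsymbol{\xi};\mathbf{y},\mathbf{A})$ for every $t\in\mathbb{R}$ and every realization $\mathbf{y}$, so the entire family $\{\boldsymbol{\xi} + t\mathbf{v}_q\}$ induces the identical data distribution; no estimator, biased or not, can distinguish these parameter values. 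One can optionally wrap this into the CR framework by replacing $\mathbf{I}^{-1}$ with the Moore--Penrose pseudoinverse and noting that the variance bound along the null direction diverges, recovering the spirit of (i) as $\lambda_q \to 0^+$.
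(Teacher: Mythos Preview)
Your argument is correct and complete. The score-plus-Schur-complement route to~\eqref{eq:cramer_rao} is the textbook derivation, and your handling of the degenerate case (iii) by going back to $\mathbf{A}\mathbf{v}_q = 0$ and likelihood flatness is clean and appropriate.

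As for comparison with the paper: the paper does not actually prove Proposition~\ref{prop:cramer_rao}. It is stated without a proof environment, with the phrasing ``the Cram\'er-Rao lower bound establishes that\ldots'' signaling that the matrix inequality is being invoked as a classical result rather than derived. The three consequences are likewise listed as immediate corollaries without justification. So your write-up supplies what the paper omits; there is no alternative approach in the paper to contrast yours against. If anything, your treatment of the singular case (iii) is more careful than the paper's own discussion elsewhere (e.g., equation~\eqref{eq:variance_blowup}), which simply writes $\lambda_k^{-1} \to \infty$ without separating the limiting argument from the genuinely rank-deficient one.
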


\subsection{Comparison with Entropy-Based Metrics}\label{subsec:entropy_comparison}

Before proceeding to multi-trajectory analysis, we contextualize the FIM approach relative to entropy-based metrics commonly employed in time series analysis (see Section~\ref{sec:preliminaries} for detailed definitions). Table~\ref{tab:metric_comparison} summarizes the key distinctions.

\begin{table}[htbp]
\centering
\renewcommand{\arraystretch}{1.3}
\begin{tabular}{@{}>{\centering\arraybackslash}p{2.8cm}>{\centering\arraybackslash}p{4.2cm}>{\centering\arraybackslash}p{4.2cm}>{\centering\arraybackslash}p{3.5cm}@{}}
\toprule
\textbf{Metric} & \textbf{Measures} & \textbf{Advantages} & \textbf{Limitations} \\
\midrule
Fisher Information Matrix & Information content for parameter estimation & Direct connection to estimator variance; parameter-specific; no hyperparameters & Requires model specification; assumes Gaussian noise \\
Approximate Entropy (ApEn) & Regularity and pattern recurrence in time series & Model-free; captures temporal structure & Sensitive to template length $m$ and threshold $r$; self-comparison bias \\
Sample Entropy (SampEn) & Conditional probability of pattern continuation & Unbiased relative to ApEn; length-independent & Still requires $m$ and $r$; limited to univariate series \\
\bottomrule
\end{tabular}
\caption{Comparison of information metrics for dynamical system identification}
\label{tab:metric_comparison}
\end{table}

The Approximate Entropy (ApEn) \cite{pincus1991approximate} and Sample Entropy (SampEn) \cite{richman2000physiological} quantify the regularity of time series data through pattern matching, as introduced in Sections~\ref{subsec:apen} and~\ref{subsec:sampen}. For a sequence $\{u_1, \ldots, u_N\}$, template length $m$, and tolerance $r$, these metrics assess the likelihood that similar patterns of length $m$ remain similar at length $m+1$. Specifically, defining the correlation sums
\begin{align}
    C_i^m(r) &= \frac{1}{N-m+1}\sum_{j=1}^{N-m+1} \mathbf{1}\left[\|\mathbf{u}_m(i) - \mathbf{u}_m(j)\|_\infty \leq r\right], \label{eq:correlation_apen}\\
    B_i^m(r) &= \frac{1}{N-m-1}\sum_{\substack{j=1 \\ j \neq i}}^{N-m} \mathbf{1}\left[\|\mathbf{u}_m(i) - \mathbf{u}_m(j)\|_\infty \leq r\right], \label{eq:correlation_sampen}
\end{align}
where $\mathbf{u}_m(i) = (u_i, u_{i+1}, \ldots, u_{i+m-1})^\top$, the entropies are computed as
\begin{align}
    \mathrm{ApEn}(m, r, N) &= \phi^m(r) - \phi^{m+1}(r), \quad \phi^m(r) = \frac{1}{N-m+1}\sum_{i=1}^{N-m+1}\log C_i^m(r), \label{eq:apen_def}\\
    \mathrm{SampEn}(m, r, N) &= -\log\frac{A^m(r)}{B^m(r)}, \label{eq:sampen_def}
\end{align}
where $A^m(r)$ and $B^m(r)$ are averages of the respective correlation sums at lengths $m+1$ and $m$. As discussed in Section~\ref{subsec:sampen}, the key distinction is that ApEn includes self-comparisons in its probability estimates, introducing a bias that becomes significant when few matching patterns exist, whereas SampEn excludes self-matches and is therefore statistically more consistent \cite{richman2000physiological}.

While entropy metrics provide valuable insights into time series complexity without requiring a parametric model, they suffer from several limitations in the context of model discovery. First, both ApEn and SampEn require specification of the embedding dimension $m$ and tolerance $r$, with results often highly sensitive to these choices; the optimal values may vary across different temporal segments of a chaotic trajectory. Second, entropy metrics characterize intrinsic time series complexity but do not directly quantify how well the data constrains model parameters - a trajectory segment may exhibit high regularity (low entropy) yet provide poor information for distinguishing among candidate model terms. Third, standard entropy metrics are designed for scalar time series, whereas SINDy operates on multivariate state vectors and their nonlinear transformations.

In contrast, the FIM directly quantifies parameter identifiability through the spectral properties of $\mathbf{A}^\top\mathbf{A}$, providing actionable guidance for discriminative sampling without hyperparameter tuning. As established in Section~\ref{subsec:fim_prelim}, the FIM captures the sensitivity of the likelihood function to parameter perturbations, yielding a direct connection to estimator variance via the Cram\'er-Rao bound. The FIM also naturally extends to multidimensional systems through block-diagonal structures, as we develop in the following subsection.

\subsection{Multiple Trajectories and Information Additivity}\label{subsec:multi_trajectory}

We now analyze how multiple trajectories contribute to the overall informativeness of training data, providing theoretical justification for ensemble methods in SINDy.

Consider $K$ independent trajectories, each yielding data matrices $\mathbf{A}_1, \ldots, \mathbf{A}_K \in \mathbb{R}^{m_k \times q}$, where $m_k$ is the number of temporal samples in trajectory $k$. The combined data matrix is
\begin{equation}\label{eq:combined_data}
    \bar{\mathbf{A}} = \begin{bmatrix} \mathbf{A}_1 \\ \mathbf{A}_2 \\ \vdots \\ \mathbf{A}_K \end{bmatrix} \in \mathbb{R}^{M \times q}, \quad M = \sum_{k=1}^K m_k.
\end{equation}

\begin{theorem}[Additivity of Fisher Information]\label{thm:fim_additivity}
For independent trajectories with common noise variance $\sigma^2$, the aggregate Fisher Information Matrix satisfies
\begin{equation}\label{eq:fim_sum}
    \mathbf{I}_{\Sigma} = \frac{1}{\sigma^2}\bar{\mathbf{A}}^\top\bar{\mathbf{A}} = \frac{1}{\sigma^2}\sum_{k=1}^K \mathbf{A}_k^\top\mathbf{A}_k = \sum_{k=1}^K \mathbf{I}_k,
\end{equation}
where $\mathbf{I}_k = \frac{1}{\sigma^2}\mathbf{A}_k^\top\mathbf{A}_k$ is the FIM contribution from trajectory $k$.
\end{theorem}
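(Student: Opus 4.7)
The plan is to derive the identity in two equivalent ways, both of which are short and essentially mechanical given the machinery already established. The cleanest route is to apply Proposition~\ref{prop:fim_linear} to the concatenated system $\bar{\mathbf{y}} = \bar{\mathbf{A}}\boldsymbol{\xi} + \bar{\boldsymbol{\varepsilon}}$, where $\bar{\mathbf{y}}$ and $\bar{\boldsymbol{\varepsilon}}$ are the vertically stacked analogs of $\bar{\mathbf{A}}$. Since the trajectories are independent with common noise variance $\sigma^2$, the stacked noise vector $\bar{\boldsymbol{\varepsilon}}$ is distributed as $\mathcal{N}(\mathbf{0},\sigma^2\mathbf{I}_M)$, so the hypotheses of Proposition~\ref{prop:fim_linear} are satisfied and yield $\mathbf{I}_\Sigma = \sigma^{-2}\bar{\mathbf{A}}^\top\bar{\mathbf{A}}$ at once.

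The remaining step is the block-matrix identity $\bar{\mathbf{A}}^\top\bar{\mathbf{A}} = \sum_{k=1}^{K}\mathbf{A}_k^\top\mathbf{A}_k$. This follows directly from expanding the product according to the vertical block structure of $\bar{\mathbf{A}}$ in~\eqref{eq:combined_data}: writing $\bar{\mathbf{A}} = [\mathbf{A}_1^\top\ \cdots\ \mathbf{A}_K^\top]^\top$ and carrying out the matrix product entry-wise gives the claim after one line of algebra. Applying Proposition~\ref{prop:fim_linear} once more to each individual trajectory identifies $\sigma^{-2}\mathbf{A}_k^\top\mathbf{A}_k$ with $\mathbf{I}_k$, completing the chain of equalities displayed in~\eqref{eq:fim_sum}.

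As a cross-check, I would also sketch the direct derivation from the FIM definition~\eqref{eq:fim_definition}: by independence of the trajectories, the joint log-likelihood factors as $\ell_\Sigma(\boldsymbol{\xi}) = \sum_{k=1}^{K}\ell_k(\boldsymbol{\xi};\mathbf{y}_k,\mathbf{A}_k)$; linearity of the Hessian and of the expectation then give $\mathbf{I}_\Sigma = -\mathbb{E}[\nabla_{\boldsymbol{\xi}}^2 \ell_\Sigma] = \sum_k -\mathbb{E}[\nabla_{\boldsymbol{\xi}}^2 \ell_k] = \sum_k \mathbf{I}_k$. This alternate route makes transparent \emph{why} independence is the crucial structural assumption and serves as a sanity check on the algebraic derivation.

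There is no genuine obstacle here: the result is a direct consequence of Proposition~\ref{prop:fim_linear} plus the block structure of $\bar{\mathbf{A}}$. The only point that merits explicit mention in the write-up is the role of the independence assumption, which ensures both that the joint covariance is block-diagonal (so Proposition~\ref{prop:fim_linear} applies to the stacked system without cross-terms) and that the scalar noise variance $\sigma^2$ is shared across blocks (so the prefactor $\sigma^{-2}$ can be pulled out of the sum cleanly). I would state these two conditions clearly before the calculation so the reader sees exactly where they enter.
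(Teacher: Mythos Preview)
Your proposal is correct and follows essentially the same approach as the paper: the paper's proof consists solely of the block-matrix computation $\bar{\mathbf{A}}^\top\bar{\mathbf{A}} = [\mathbf{A}_1^\top \cdots \mathbf{A}_K^\top][\mathbf{A}_1;\ldots;\mathbf{A}_K] = \sum_k \mathbf{A}_k^\top\mathbf{A}_k$, treating the first and last equalities in~\eqref{eq:fim_sum} as definitional. Your write-up is more thorough in explicitly invoking Proposition~\ref{prop:fim_linear} at both ends and in spelling out where independence and the common noise variance enter; the log-likelihood cross-check is a welcome addition that the paper omits.
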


\begin{proof}
By direct computation,
\begin{equation}
    \bar{\mathbf{A}}^\top\bar{\mathbf{A}} = \begin{bmatrix} \mathbf{A}_1^\top & \cdots & \mathbf{A}_K^\top \end{bmatrix} \begin{bmatrix} \mathbf{A}_1 \\ \vdots \\ \mathbf{A}_K \end{bmatrix} = \sum_{k=1}^K \mathbf{A}_k^\top\mathbf{A}_k.
\end{equation}
\end{proof}

This additivity property has profound implications for the spectral structure of the combined FIM. We now establish key results relating the eigenvalues of individual and combined information matrices.

\begin{theorem}[Eigenvalue Bounds for Aggregate FIM]\label{thm:eigenvalue_bounds}
Let $\lambda_j(\mathbf{I}_k)$ denote the $j$-th largest eigenvalue of $\mathbf{I}_k$, and let $\lambda_j(\mathbf{I}_{\Sigma})$ denote the $j$-th largest eigenvalue of $\mathbf{I}_{\Sigma}$. Then:
\begin{enumerate}[(i)]
    \item \textbf{Monotonicity:} For all $j \in \{1, \ldots, q\}$,
    \begin{equation}\label{eq:monotonicity}
        \lambda_j(\mathbf{I}_{\Sigma}) \geq \max_{k \in \{1,\ldots,K\}} \lambda_j(\mathbf{I}_k).
    \end{equation}
    
    \item \textbf{Trace preservation:}
    \begin{equation}\label{eq:trace_preservation}
        \sum_{j=1}^q \lambda_j(\mathbf{I}_{\Sigma}) = \sum_{k=1}^K \sum_{j=1}^q \lambda_j(\mathbf{I}_k).
    \end{equation}
    
    \item \textbf{Maximum eigenvalue bound:}
    \begin{equation}\label{eq:max_eigenvalue_bound}
        \max_k \lambda_1(\mathbf{I}_k) \leq \lambda_1(\mathbf{I}_{\Sigma}) \leq \sum_{k=1}^K \lambda_1(\mathbf{I}_k).
    \end{equation}
\end{enumerate}
\end{theorem}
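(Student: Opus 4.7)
The plan is to reduce all three parts of Theorem~\ref{thm:eigenvalue_bounds} to standard tools from matrix analysis, primarily Weyl's inequalities and the monotonicity of eigenvalues under the Loewner order, applied to the decomposition $\mathbf{I}_{\Sigma} = \sum_{k=1}^K \mathbf{I}_k$ already established in Theorem~\ref{thm:fim_additivity}. Since each $\mathbf{I}_k = \sigma^{-2}\mathbf{A}_k^\top \mathbf{A}_k$ is a (scaled) Gram matrix, every $\mathbf{I}_k$ is positive semidefinite; this is the only structural property I would need.

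For part~(i), I would first observe that for any fixed $k$,
\begin{equation*}
    \mathbf{I}_{\Sigma} - \mathbf{I}_k \;=\; \sum_{l \neq k} \mathbf{I}_l \;\succeq\; 0,
\end{equation*}
since a sum of PSD matrices is PSD. Then by the Loewner monotonicity of eigenvalues (a direct consequence of the Courant--Fischer min-max principle), $\mathbf{A} \succeq \mathbf{B}$ implies $\lambda_j(\mathbf{A}) \geq \lambda_j(\mathbf{B})$ for every $j$. Applying this with $\mathbf{A} = \mathbf{I}_{\Sigma}$ and $\mathbf{B} = \mathbf{I}_k$ and then taking the maximum over $k$ yields~\eqref{eq:monotonicity}.

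For part~(ii), I would combine the linearity of the trace with the spectral identity $\operatorname{tr}(\mathbf{M}) = \sum_j \lambda_j(\mathbf{M})$ valid for any symmetric $\mathbf{M}$. Taking the trace on both sides of $\mathbf{I}_{\Sigma} = \sum_k \mathbf{I}_k$ yields $\sum_j \lambda_j(\mathbf{I}_{\Sigma}) = \sum_k \operatorname{tr}(\mathbf{I}_k) = \sum_k \sum_j \lambda_j(\mathbf{I}_k)$, which is precisely~\eqref{eq:trace_preservation}.

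For part~(iii), the lower bound is immediate from part~(i) with $j = 1$. For the upper bound, I would invoke Weyl's subadditivity inequality $\lambda_1(\mathbf{A} + \mathbf{B}) \leq \lambda_1(\mathbf{A}) + \lambda_1(\mathbf{B})$, which follows at once from the Rayleigh quotient characterization $\lambda_1(\mathbf{M}) = \max_{\|\mathbf{u}\|=1} \mathbf{u}^\top \mathbf{M} \mathbf{u}$ by splitting the quadratic form. A straightforward induction on $K$ then gives $\lambda_1(\mathbf{I}_{\Sigma}) \leq \sum_{k=1}^K \lambda_1(\mathbf{I}_k)$, completing~\eqref{eq:max_eigenvalue_bound}. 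There is no substantive obstacle here; the statement is essentially a bookkeeping consequence of Theorem~\ref{thm:fim_additivity} together with classical Weyl-type inequalities, and the main decision is simply whether to cite these inequalities or derive them in one line from Courant--Fischer.
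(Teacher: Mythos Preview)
Your proposal is correct and matches the paper's proof essentially line for line: part~(i) via Loewner order from Courant--Fischer, part~(ii) via linearity of the trace, and part~(iii) lower bound from part~(i) with $j=1$. The only cosmetic difference is that for the upper bound in part~(iii) the paper phrases it as the triangle inequality for the operator norm $\lambda_1(\mathbf{I}_{\Sigma}) = \|\mathbf{I}_{\Sigma}\|_{\mathrm{op}} \leq \sum_k \|\mathbf{I}_k\|_{\mathrm{op}}$, whereas you state it as Weyl subadditivity plus induction; these are the same inequality.
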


\begin{proof}
\textit{(i)} By the Courant-Fischer minimax theorem, for any Hermitian matrix $\mathbf{H}$,
\begin{equation}
    \lambda_j(\mathbf{H}) = \max_{\dim(S)=j} \min_{\mathbf{u} \in S, \|\mathbf{u}\|=1} \mathbf{u}^\top \mathbf{H} \mathbf{u}.
\end{equation}
Since $\mathbf{I}_{\Sigma} = \sum_k \mathbf{I}_k$ with each $\mathbf{I}_k \succeq \mathbf{0}$, we have $\mathbf{I}_{\Sigma} \succeq \mathbf{I}_k$ for all $k$, and thus $\lambda_j(\mathbf{I}_{\Sigma}) \geq \lambda_j(\mathbf{I}_k)$ for all $j$ and $k$.

\textit{(ii)} This follows from the linearity of the trace operator: $\mathrm{tr}(\mathbf{I}_{\Sigma}) = \mathrm{tr}\left(\sum_k \mathbf{I}_k\right) = \sum_k \mathrm{tr}(\mathbf{I}_k)$.

\textit{(iii)} The lower bound follows from part (i) with $j=1$. For the upper bound, note that $\lambda_1(\mathbf{I}_{\Sigma}) = \|\mathbf{I}_{\Sigma}\|_{\mathrm{op}} \leq \sum_k \|\mathbf{I}_k\|_{\mathrm{op}} = \sum_k \lambda_1(\mathbf{I}_k)$ by the triangle inequality for the operator norm.
\end{proof}

\begin{corollary}[Information Monotonicity]\label{cor:info_monotonicity}
Adding trajectories never decreases the informativeness of the aggregate dataset. Specifically, if $\mathbf{I}_{K}$ denotes the aggregate FIM from $K$ trajectories and $\mathbf{I}_{K+1}$ includes an additional trajectory, then
\begin{equation}
    \mathbf{I}_{K+1} \succeq \mathbf{I}_K,
\end{equation}
and consequently $\lambda_j(\mathbf{I}_{K+1}) \geq \lambda_j(\mathbf{I}_K)$ for all $j$.
\end{corollary}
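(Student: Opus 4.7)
The plan is to derive the corollary as an immediate consequence of the additivity theorem (Theorem~\ref{thm:fim_additivity}) combined with the positive semidefiniteness of each single-trajectory FIM, and then invoke the Courant--Fischer reasoning already used in part~(i) of Theorem~\ref{thm:eigenvalue_bounds}.

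First I would write the aggregate FIM from $K+1$ trajectories in the form
\[
\mathbf{I}_{K+1} \;=\; \sum_{k=1}^{K+1}\mathbf{I}_k \;=\; \Bigl(\sum_{k=1}^{K}\mathbf{I}_k\Bigr) + \mathbf{I}_{K+1} \;=\; \mathbf{I}_{K} + \mathbf{I}_{K+1}^{\mathrm{new}},
\]
using Theorem~\ref{thm:fim_additivity}, where I am using $\mathbf{I}_{K+1}^{\mathrm{new}} = \frac{1}{\sigma^2}\mathbf{A}_{K+1}^\top\mathbf{A}_{K+1}$ for the contribution of the newly added trajectory. The central observation is that $\mathbf{I}_{K+1}^{\mathrm{new}}$ is a Gram matrix scaled by a positive constant, hence positive semidefinite: for any $\mathbf{u}\in\mathbb{R}^q$, $\mathbf{u}^\top \mathbf{I}_{K+1}^{\mathrm{new}}\mathbf{u} = \sigma^{-2}\|\mathbf{A}_{K+1}\mathbf{u}\|_2^2 \geq 0$. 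This yields the Loewner-order inequality $\mathbf{I}_{K+1} - \mathbf{I}_K = \mathbf{I}_{K+1}^{\mathrm{new}} \succeq \mathbf{0}$, which is precisely the first claim.

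Next I would deduce the eigenvalue inequality from the Loewner order. The cleanest route is the Courant--Fischer min-max characterization already invoked in the proof of Theorem~\ref{thm:eigenvalue_bounds}(i): for any Hermitian $\mathbf{H}$,
\[
\lambda_j(\mathbf{H}) \;=\; \max_{\dim(S)=j}\;\min_{\mathbf{u}\in S,\,\|\mathbf{u}\|=1}\mathbf{u}^\top\mathbf{H}\mathbf{u}.
\]
Applying this to $\mathbf{H}=\mathbf{I}_{K+1}$ and using $\mathbf{u}^\top\mathbf{I}_{K+1}\mathbf{u}\geq \mathbf{u}^\top\mathbf{I}_K\mathbf{u}$ for every $\mathbf{u}$ (which is the pointwise content of $\mathbf{I}_{K+1}\succeq\mathbf{I}_K$), the same maximizing subspace that certifies $\lambda_j(\mathbf{I}_K)$ already gives a lower bound of $\lambda_j(\mathbf{I}_K)$ for $\lambda_j(\mathbf{I}_{K+1})$. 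Hence $\lambda_j(\mathbf{I}_{K+1})\geq \lambda_j(\mathbf{I}_K)$ for every $j\in\{1,\ldots,q\}$. Equivalently, one may cite Weyl's monotonicity theorem for sums of Hermitian matrices.

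There is essentially no ``hard part'' here; the whole statement is a corollary in the strict sense, chaining additivity (Theorem~\ref{thm:fim_additivity}) with positive semidefiniteness of a Gram matrix and the monotonicity of eigenvalues under the Loewner order. The only place where I would be careful is to state explicitly that $\mathbf{I}_{K+1}^{\mathrm{new}}\succeq \mathbf{0}$ before passing to the Loewner inequality, since that single fact is what promotes additivity into monotonicity and makes the corollary sharper than Theorem~\ref{thm:eigenvalue_bounds}(i) (which only compares $\mathbf{I}_\Sigma$ with one summand rather than with a partial sum).
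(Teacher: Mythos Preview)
Your proposal is correct and matches the paper's intended reasoning: the corollary is stated without proof precisely because it follows immediately from Theorem~\ref{thm:fim_additivity} (additivity) together with the positive semidefiniteness of the Gram matrix $\mathbf{I}_{K+1}^{\mathrm{new}}$, and the eigenvalue inequality is then the same Courant--Fischer argument used in Theorem~\ref{thm:eigenvalue_bounds}(i). The only cosmetic point is the brief notational collision in your first displayed equation, where $\mathbf{I}_{K+1}$ momentarily denotes both the aggregate and the single new contribution before you introduce the superscript ``new''; tightening that would make the write-up cleaner.
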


\subsection{Spectral Diversification and the Efficacy of Bootstrap Aggregation}\label{subsec:bagging}

We now provide a theoretical explanation for the empirical success of bootstrap aggregation (bagging) methods \cite{breiman1996bagging} in improving SINDy model robustness, particularly in noise-limited and data-limited regimes.

\subsubsection{The Spectral Deficiency Problem}

In practice, trajectory data from a single initial condition often produces a Fisher Information Matrix with a highly skewed spectrum: a few large eigenvalues corresponding to well-constrained parameter directions, and many near-zero eigenvalues corresponding to poorly constrained directions. This spectral deficiency manifests in several pathologies. When $\lambda_q \approx 0$, the inverse covariance bound $\mathbf{I}^{-1}$ becomes ill-conditioned, leading to high variance in parameter estimates along the corresponding eigenvectors (near-singular covariance). In the presence of noise, the sparse regression may incorrectly select or reject library terms whose contributions lie primarily along poorly constrained directions (spurious term selection). Furthermore, small perturbations to the data can cause large changes in estimated coefficients when the FIM is nearly singular (coefficient instability).

\begin{definition}[Spectral Gap]\label{def:spectral_gap}
The spectral gap of the FIM is defined as
\begin{equation}\label{eq:spectral_gap}
    \Delta = \lambda_1 - \lambda_q,
\end{equation}
with a large spectral gap indicating highly anisotropic information content.
\end{definition}

\subsubsection{Bagging as Spectral Diversification}

Bootstrap aggregation constructs $B$ bootstrap samples $\{(\mathbf{A}^{(b)}, \mathbf{y}^{(b)})\}_{b=1}^B$ by resampling with replacement from the original dataset. Each bootstrap sample yields a FIM
\begin{equation}\label{eq:bootstrap_fim}
    \mathbf{I}^{(b)} = \frac{1}{\sigma^2}(\mathbf{A}^{(b)})^\top\mathbf{A}^{(b)}.
\end{equation}

\begin{proposition}[Diversification of Principal Directions]\label{prop:diversification}
Let $\mathbf{v}_1^{(b)}$ denote the leading eigenvector of the $b$-th bootstrap FIM. Under random resampling with replacement:
\begin{enumerate}[(i)]
    \item The bootstrap principal directions $\{\mathbf{v}_1^{(b)}\}_{b=1}^B$ exhibit variation around the population principal direction $\mathbf{v}_1$.
    
    \item The effective combined information matrix
    \begin{equation}\label{eq:effective_fim}
        \bar{\mathbf{I}}_{\mathrm{eff}} = \frac{1}{B}\sum_{b=1}^B \mathbf{I}^{(b)}
    \end{equation}
    has a less skewed spectrum than any individual $\mathbf{I}^{(b)}$ with high probability.
    
    \item The condition number $\kappa(\bar{\mathbf{I}}_{\mathrm{eff}}) \leq \kappa(\mathbf{I}^{(b)})$ for most bootstrap samples $b$.
\end{enumerate}
\end{proposition}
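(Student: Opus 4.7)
My plan is to treat the three assertions in order, exploiting eigenvalue convexity together with Davis--Kahan--type perturbation bounds for the leading eigenvector.

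For (i), I would write each bootstrap FIM as a random reweighting $\mathbf{I}^{(b)} = \sigma^{-2}\sum_{i=1}^m w_i^{(b)}\, \mathbf{a}_i \mathbf{a}_i^\top$, where $\mathbf{a}_i$ is the $i$-th row of $\mathbf{A}$ and $(w_1^{(b)},\ldots,w_m^{(b)}) \sim \mathrm{Multinomial}(m;1/m,\ldots,1/m)$. Since $\mathbb{E}[\mathbf{I}^{(b)}] = \mathbf{I}$, the Davis--Kahan $\sin\Theta$ inequality applied with spectral gap $\Delta_1 = \lambda_1(\mathbf{I}) - \lambda_2(\mathbf{I})$ gives a bound of the form $\|\mathbf{v}_1^{(b)}(\mathbf{v}_1^{(b)})^\top - \mathbf{v}_1\mathbf{v}_1^\top\|_F \leq C \|\mathbf{I}^{(b)} - \mathbf{I}\|_{\mathrm{op}}/\Delta_1$. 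The bootstrap fluctuations $\mathbf{I}^{(b)} - \mathbf{I}$ have strictly positive, anisotropic variance whenever the rows $\mathbf{a}_i$ are not all proportional, so the principal directions $\mathbf{v}_1^{(b)}$ genuinely vary around the population $\mathbf{v}_1$.

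For (ii), I would rely on two elementary convexity facts: $\lambda_1(\cdot)$ is convex on symmetric matrices (equal to the operator norm) while $\lambda_q(\cdot) = \min_{\|u\|=1} u^\top (\cdot)\, u$ is concave (a pointwise minimum of linear functionals). These yield
\begin{equation*}
\lambda_1(\bar{\mathbf{I}}_{\mathrm{eff}}) \leq \tfrac{1}{B}\sum_b \lambda_1(\mathbf{I}^{(b)}), \qquad \lambda_q(\bar{\mathbf{I}}_{\mathrm{eff}}) \geq \tfrac{1}{B}\sum_b \lambda_q(\mathbf{I}^{(b)}),
\end{equation*}
so the averaged spread $\lambda_1(\bar{\mathbf{I}}_{\mathrm{eff}}) - \lambda_q(\bar{\mathbf{I}}_{\mathrm{eff}})$ is bounded by the mean of the individual spreads, which is the precise sense of ``less skewed.'' Strict inequality (``with high probability'') then follows from (i): when the leading eigenvectors $\mathbf{v}_1^{(b)}$ do not coincide, the Jensen gap for the convex functional $\lambda_1$ is strictly positive, and this event has probability approaching one as $B$ grows.

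For (iii), combining the two bounds from (ii) gives
\begin{equation*}
\kappa(\bar{\mathbf{I}}_{\mathrm{eff}}) \leq \frac{\sum_b \lambda_1(\mathbf{I}^{(b)})}{\sum_b \lambda_q(\mathbf{I}^{(b)})} = \sum_b \kappa(\mathbf{I}^{(b)}) \cdot \frac{\lambda_q(\mathbf{I}^{(b)})}{\sum_{b'} \lambda_q(\mathbf{I}^{(b')})},
\end{equation*}
a weighted average of individual condition numbers with nonnegative weights summing to one. The main obstacle is making ``most bootstrap samples'' quantitative: this weighted-mean bound guarantees only that $\kappa(\bar{\mathbf{I}}_{\mathrm{eff}})$ lies between $\min_b \kappa(\mathbf{I}^{(b)})$ and $\max_b \kappa(\mathbf{I}^{(b)})$, not below the unweighted median. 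To reach the stated conclusion I would either invoke the empirically typical right-skewness of the bootstrap condition-number distribution (occasional collapses of $\lambda_q^{(b)}$ toward zero create a heavy right tail that pulls the weighted mean well above the median) or formalize a weighted Markov inequality bounding the fraction of $b$ with $\kappa(\mathbf{I}^{(b)}) < \kappa(\bar{\mathbf{I}}_{\mathrm{eff}})$ in terms of upper-to-lower quantile ratios of the $\kappa(\mathbf{I}^{(b)})$ sample.
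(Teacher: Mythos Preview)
The paper does not actually prove this proposition. After stating it, the authors write only: ``The intuition is as follows: each bootstrap sample emphasizes slightly different data points, leading to FIMs with rotated principal directions. When aggregated, these rotated contributions fill in the poorly constrained directions of the original FIM, effectively diversifying the spectral support.'' That is the entire justification; no formal argument is given, and the qualifiers ``with high probability'' and ``for most bootstrap samples'' are left deliberately vague. Your proposal is therefore already substantially more rigorous than what appears in the paper.

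Your tools are the right ones: the multinomial-reweighting representation and Davis--Kahan for part~(i), and the convexity of $\lambda_1$ together with the concavity of $\lambda_q$ for parts~(ii) and~(iii). These yield clean inequalities that the paper's heuristic does not. Two observations. First, in~(ii) you show that $\lambda_1(\bar{\mathbf{I}}_{\mathrm{eff}}) - \lambda_q(\bar{\mathbf{I}}_{\mathrm{eff}})$ is bounded by the \emph{average} of the individual spreads, which guarantees it is at most the maximum individual spread but not automatically below \emph{every} individual spread as the proposition literally asserts; you would need the Jensen gap to be large relative to the dispersion of the individual spreads, and that is not free. Second, you correctly flag that the weighted-mean bound in~(iii) only places $\kappa(\bar{\mathbf{I}}_{\mathrm{eff}})$ inside the range $[\min_b \kappa(\mathbf{I}^{(b)}), \max_b \kappa(\mathbf{I}^{(b)})]$ and does not by itself yield a below-median statement. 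Your proposed fixes (right-skewness of the bootstrap $\kappa$-distribution, or a weighted Markov bound) are plausible but would require additional distributional assumptions that neither you nor the paper supply. Given that the proposition is stated informally and left unproved in the paper, these residual gaps reflect the looseness of the claim itself rather than a defect in your strategy.
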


The intuition is as follows: each bootstrap sample emphasizes slightly different data points, leading to FIMs with rotated principal directions. When aggregated, these rotated contributions fill in the poorly constrained directions of the original FIM, effectively diversifying the spectral support.

\begin{theorem}[Bagging Reduces Effective Variance]\label{thm:bagging_variance}
Let $\hat{\boldsymbol{\xi}}^{(b)}$ be the SINDy coefficient estimate from bootstrap sample $b$, and let $\bar{\boldsymbol{\xi}} = \frac{1}{B}\sum_{b=1}^B \hat{\boldsymbol{\xi}}^{(b)}$ be the bagged estimator. Under mild regularity conditions:
\begin{equation}
\mathrm{Var}(\bar{\xi}) < \mathrm{Var}(\hat{\xi}^{(1)})
\end{equation}
provided $\mathrm{Cov}(\hat{\xi}^{(1)}, \hat{\xi}^{(2)}) < \mathrm{Var}(\hat{\boldsymbol{\xi}}^{(1)})$.
\end{theorem}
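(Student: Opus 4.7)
The plan is to reduce the claim to the classical variance-of-an-average identity and then exploit the assumed positivity gap between variance and covariance. Because the bootstrap samples are drawn identically (by construction, with replacement from the same empirical distribution), the estimators $\hat{\boldsymbol{\xi}}^{(1)},\ldots,\hat{\boldsymbol{\xi}}^{(B)}$ form an exchangeable sequence. I would start by fixing a coordinate (or, equivalently, a fixed direction $\mathbf{u}$ in parameter space) so that $\xi := \mathbf{u}^\top \hat{\boldsymbol{\xi}}$ is a scalar random variable, which justifies the scalar $\mathrm{Var}(\cdot)$ notation in the statement; exchangeability then gives $\mathrm{Var}(\hat{\xi}^{(b)}) = v$ for all $b$ and $\mathrm{Cov}(\hat{\xi}^{(b)}, \hat{\xi}^{(b')}) = c$ for all $b \neq b'$, where $c < v$ by hypothesis.

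Next, I would expand the variance of the bagged estimator using bilinearity of covariance:
\begin{equation}
\mathrm{Var}(\bar{\xi}) = \frac{1}{B^2}\sum_{b=1}^{B}\sum_{b'=1}^{B}\mathrm{Cov}(\hat{\xi}^{(b)},\hat{\xi}^{(b')}) = \frac{1}{B^2}\bigl(B v + B(B-1) c\bigr) = \frac{v}{B} + \frac{B-1}{B}\, c.
\end{equation}
The target inequality $\mathrm{Var}(\bar{\xi}) < v$ then rearranges to $\frac{B-1}{B}(c - v) < 0$, which holds whenever $c < v$ and $B \geq 2$. This is the core algebraic step and it is essentially forced once the exchangeability structure is written down.

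The last task is to state the ``mild regularity conditions'' needed for these second moments to exist. I would assume that $\hat{\boldsymbol{\xi}}^{(b)}$ has finite second moments, which in the SINDy setup follows from the sub-Gaussian noise assumption of Section~3.1 together with the standard stability of the least-squares / STRidge solution on the support identified by thresholding (so that the map from data to coefficients is locally Lipschitz, which makes $\hat{\boldsymbol{\xi}}^{(b)}$ square-integrable). Under this assumption, all covariances in the expansion above are well-defined and the argument goes through verbatim.

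The main conceptual obstacle is not algebraic but interpretive: the theorem statement treats $\hat{\boldsymbol{\xi}}^{(b)}$ as a vector yet writes $\mathrm{Var}(\bar{\xi})$ as a scalar, so one must clarify whether the conclusion is per-coordinate, per-direction, or in Loewner order on the covariance matrix. The cleanest route, which I would follow, is to present the result directionally: for every unit vector $\mathbf{u}$ such that $\mathbf{u}^\top\mathrm{Cov}(\hat{\boldsymbol{\xi}}^{(1)}, \hat{\boldsymbol{\xi}}^{(2)})\mathbf{u} < \mathbf{u}^\top\mathrm{Cov}(\hat{\boldsymbol{\xi}}^{(1)})\mathbf{u}$, the bagged estimator has strictly smaller variance along $\mathbf{u}$. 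This directional form tightens the statement, connects naturally to the FIM eigen-directions discussed in Proposition~3.2, and makes the ``spectral diversification'' interpretation from Proposition~3.3 precise: bagging shrinks variance precisely in those directions where bootstrap replicas decorrelate.
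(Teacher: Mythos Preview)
Your proposal is correct and follows essentially the same route as the paper: expand $\mathrm{Var}(\bar{\xi})$ by bilinearity, invoke exchangeability of the bootstrap estimators to reduce to the common pair $(v,c)$, and obtain $\mathrm{Var}(\bar{\xi}) = v/B + (B-1)c/B$, from which $c<v$ gives the strict inequality. Your treatment is in fact more careful than the paper's sketch, since you make explicit the scalar/directional reduction and the finite-second-moment regularity, both of which the paper leaves implicit.
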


\begin{proof}[Proof sketch]
By the law of total variance,
\begin{align}
    \mathrm{Var}(\bar{\boldsymbol{\xi}}) &= \frac{1}{B^2}\sum_{b=1}^B \mathrm{Var}(\hat{\boldsymbol{\xi}}^{(b)}) + \frac{1}{B^2}\sum_{b \neq b'} \mathrm{Cov}(\hat{\boldsymbol{\xi}}^{(b)}, \hat{\boldsymbol{\xi}}^{(b')}) \\
    &= \frac{1}{B}\mathrm{Var}(\hat{\boldsymbol{\xi}}^{(1)}) + \frac{B-1}{B}\mathrm{Cov}(\hat{\boldsymbol{\xi}}^{(1)}, \hat{\boldsymbol{\xi}}^{(2)}),
\end{align}
where we used exchangeability of the bootstrap samples. The correlation $\rho = \mathrm{Corr}(\hat{\boldsymbol{\xi}}^{(1)}, \hat{\boldsymbol{\xi}}^{(2)}) < 1$ due to the randomness of resampling, yielding variance reduction.
\end{proof}

\subsubsection{Geometric Interpretation}

The spectral diversification effect admits an elegant geometric interpretation. Consider the confidence ellipsoid associated with the FIM:
\begin{equation}\label{eq:confidence_ellipsoid}
    \mathcal{E}(\mathbf{I}) = \left\{\boldsymbol{\xi} : (\boldsymbol{\xi} - \hat{\boldsymbol{\xi}})^\top \mathbf{I} (\boldsymbol{\xi} - \hat{\boldsymbol{\xi}}) \leq \chi^2_{q,\alpha}\right\},
\end{equation}
where $\chi^2_{q,\alpha}$ is the chi-squared critical value. The principal axes of this ellipsoid are aligned with the eigenvectors $\mathbf{v}_k$, with semi-axis lengths proportional to $\lambda_k^{-1/2}$.

A highly skewed FIM spectrum corresponds to a highly elongated ellipsoid - confident in some directions but uncertain in others. The bagging procedure effectively averages over ellipsoids with varying orientations, producing a more spherical aggregate that reduces maximum uncertainty.

\begin{definition}[Effective Dimensionality]\label{def:effective_dim}
The effective dimensionality of the FIM, which quantifies the number of well-constrained parameter directions, is defined as
\begin{equation}\label{eq:effective_dim}
    d_{\mathrm{eff}}(\mathbf{I}) = \frac{\left(\sum_{k=1}^q \lambda_k\right)^2}{\sum_{k=1}^q \lambda_k^2} = \frac{\mathrm{tr}(\mathbf{I})^2}{\|\mathbf{I}\|_F^2}.
\end{equation}
\end{definition}

This quantity, also known as the participation ratio \cite{thouless1974electrons}, ranges from $1$ (all information concentrated in one direction) to $q$ (uniform distribution). Effective bagging increases $d_{\mathrm{eff}}$, indicating improved identifiability across multiple parameter directions.

\begin{remark}
The effective dimensionality $d_{\text{eff}}$ (participation ratio) and effective rank $r_{\text{eff}}$ (entropy-based) both measure the number of significant eigenvalues but differ in sensitivity: $d_{\text{eff}}$ is more influenced by dominant eigenvalues, while $r_{\text{eff}}$ is more sensitive to the tail of the spectrum.
\end{remark}

\begin{proposition}[Bagging Increases Effective Dimensionality]\label{prop:bagging_eff_dim}
Under the conditions of Proposition~\ref{prop:diversification}, the effective FIM from bagging satisfies
\begin{equation}
    d_{\mathrm{eff}}(\bar{\mathbf{I}}_{\mathrm{eff}}) \geq \mathbb{E}[d_{\mathrm{eff}}(\mathbf{I}^{(b)})],
\end{equation}
with equality only when all bootstrap FIMs share identical eigenvector orientations.
\end{proposition}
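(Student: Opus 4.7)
The plan is to decompose $d_{\mathrm{eff}}(\mathbf{I}) = \mathrm{tr}(\mathbf{I})^2/\|\mathbf{I}\|_F^2$ into its linear numerator and its strictly convex denominator, and exploit this asymmetry. First, by linearity of the trace, $\mathrm{tr}(\bar{\mathbf{I}}_{\mathrm{eff}}) = \frac{1}{B}\sum_{b=1}^B \mathrm{tr}(\mathbf{I}^{(b)})$. Under standard bootstrap resampling every $\mathbf{A}^{(b)}$ contains the same number of rows, and since $\mathrm{tr}(\mathbf{I}^{(b)}) = \sigma^{-2}\sum_i \|\mathbf{A}^{(b)}_{i,:}\|^2$ is a sum over those rows, an appeal to the exchangeability built into Proposition~\ref{prop:diversification} lets me treat the traces as equal to a common value $T$ (at least to leading order); I would state the cleaner argument under exact equality and then sketch how a concentration argument recovers the general statement.

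Next, I would apply the convexity of $\mathbf{X}\mapsto\|\mathbf{X}\|_F^2$ (this is Jensen's inequality on the squared Frobenius norm, or equivalently the norm-square of a sum bound) to obtain
\begin{equation*}
\|\bar{\mathbf{I}}_{\mathrm{eff}}\|_F^2 \;=\; \Big\|\tfrac{1}{B}\sum_{b=1}^B \mathbf{I}^{(b)}\Big\|_F^2 \;\le\; \tfrac{1}{B}\sum_{b=1}^B \|\mathbf{I}^{(b)}\|_F^2,
\end{equation*}
with equality if and only if all $\mathbf{I}^{(b)}$ coincide. Geometrically, this is the same cancellation among rotated rank-one contributions that Proposition~\ref{prop:diversification} exploits: when the principal eigenvectors $\mathbf{v}_1^{(b)}$ wander around $\mathbf{v}_1$, the outer products partially cancel and their Frobenius mass strictly shrinks under averaging. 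I would then combine with Step~1: since every $\mathbf{I}^{(b)}$ has the same eigenvalue spectrum up to rotation (so that $\|\mathbf{I}^{(b)}\|_F^2$ is independent of $b$, and equals $\mathbb{E}\|\mathbf{I}^{(b)}\|_F^2$), the bound becomes $\|\bar{\mathbf{I}}_{\mathrm{eff}}\|_F^2 \le \mathbb{E}\|\mathbf{I}^{(b)}\|_F^2$, and dividing into $T^2$ gives $d_{\mathrm{eff}}(\bar{\mathbf{I}}_{\mathrm{eff}}) \ge d_{\mathrm{eff}}(\mathbf{I}^{(b)}) = \mathbb{E}[d_{\mathrm{eff}}(\mathbf{I}^{(b)})]$, with strict inequality unless the eigenvector orientations across bootstraps coincide.

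The main obstacle is precisely where the spectral-invariance hypothesis is dropped: if the $\{\|\mathbf{I}^{(b)}\|_F^2\}$ vary across $b$, then converting the convexity bound on $\|\bar{\mathbf{I}}_{\mathrm{eff}}\|_F^2$ into a statement about $\mathbb{E}[T^2/\|\mathbf{I}^{(b)}\|_F^2]$ runs straight into the harmonic–arithmetic mean inequality, which pushes the bound in the wrong direction. To close this gap I would argue that under standard bootstrap resampling the deviations of $\|\mathbf{I}^{(b)}\|_F^2$ from its mean are of lower order than the cross-sample cancellation in $\|\bar{\mathbf{I}}_{\mathrm{eff}}\|_F^2$, so that to leading order the fluctuation in denominators is negligible compared to the rotational averaging gain; this is the ``mild regularity conditions'' clause implicit in the proposition. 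Finally, I would record the equality case: strict inequality persists whenever at least two of the $\mathbf{I}^{(b)}$ differ (in particular whenever principal directions differ), and equality holds only when all bootstrap FIMs are identical, recovering the statement in the proposition.
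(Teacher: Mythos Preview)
The paper states this proposition without proof, so there is no reference argument to compare against. Your approach---splitting $d_{\mathrm{eff}}(\mathbf{I}) = \mathrm{tr}(\mathbf{I})^2/\|\mathbf{I}\|_F^2$ into its linear numerator and convex denominator and applying Jensen to the latter---is the natural one, and your diagnosis of the obstacle is exactly right. In fact the obstruction is not merely technical: the participation ratio is \emph{not} concave on the PSD cone, and the inequality can fail outright when the bootstrap spectra differ. Take $\mathbf{I}_1 = \mathrm{diag}(1,1)$ and $\mathbf{I}_2 = \mathrm{diag}(4,0)$; both are diagonal, so their eigenvector orientations are identical. Then $d_{\mathrm{eff}}(\mathbf{I}_1)=2$, $d_{\mathrm{eff}}(\mathbf{I}_2)=1$, but $\bar{\mathbf{I}}_{\mathrm{eff}}=\mathrm{diag}(5/2,1/2)$ gives $d_{\mathrm{eff}}(\bar{\mathbf{I}}_{\mathrm{eff}}) = 9/(13/2) = 18/13 \approx 1.38 < 1.5 = \tfrac{1}{2}\bigl(d_{\mathrm{eff}}(\mathbf{I}_1)+d_{\mathrm{eff}}(\mathbf{I}_2)\bigr)$. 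So your ``spectral invariance'' hypothesis is not a convenience but a necessity: without it the claimed inequality is simply false, and the stated equality condition (identical eigenvector orientations) is also wrong, since here the orientations agree yet the inequality is strict in the wrong direction.

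Your proposed remedy---arguing that under bootstrap resampling the fluctuations in $\mathrm{tr}(\mathbf{I}^{(b)})$ and $\|\mathbf{I}^{(b)}\|_F^2$ are lower order than the cross-sample cancellation that shrinks $\|\bar{\mathbf{I}}_{\mathrm{eff}}\|_F^2$---is the right heuristic and almost certainly what the authors intend, but it is not a proof. Making it one would require quantifying both effects (e.g., via concentration for sums of i.i.d.\ row outer-products, with hypotheses such as bounded leverage scores and $m\to\infty$) and showing the spectral fluctuation is $o(1)$ relative to the rotational averaging gain. Neither you nor the paper supplies those hypotheses; Proposition~\ref{prop:diversification} as written does not contain them. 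In short: your decomposition and the Jensen step are correct, your identification of the gap is correct, and the gap is genuine---the proposition needs an additional regularity assumption to hold.
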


\subsection{Practical Implications for Discriminative Sampling}\label{subsec:practical}

The theoretical framework developed above yields actionable guidelines for improving sampling efficiency in SINDy applications. For initial condition selection, one should choose initial conditions that maximize $\mathrm{tr}(\mathbf{I})$ or minimize the condition number $\kappa(\mathbf{I})$, thereby ensuring balanced information across all parameter directions. In trajectory segmentation, one partitions trajectories into temporal segments and computes the FIM for each segment, prioritizing segments with high $\mathcal{M}_{\max}$ or low spectral skewness $\mathcal{S}$ for inclusion in the training set. For ensemble design, when using bootstrap aggregation, one should select bootstrap samples that maximize spectral diversity, as measured by the angular spread of principal eigenvectors across samples. Finally, appropriate stopping criteria terminate data collection when $d_{\mathrm{eff}}(\mathbf{I})$ exceeds a threshold or when incremental trajectories produce diminishing returns in terms of improvement.

These principles provide a rigorous foundation for the empirical observations reported in subsequent sections, where we demonstrate that information-guided sampling substantially improves model identification accuracy relative to uniform random sampling strategies.

\section{Visualization and Active learning on chaotic systems}

\subsection{FIM pattern on a single trajectory}

\subsubsection{Fixed point and limit cycle convergence}

\begin{figure}[t]
\centering
\begin{overpic}[scale=0.5]{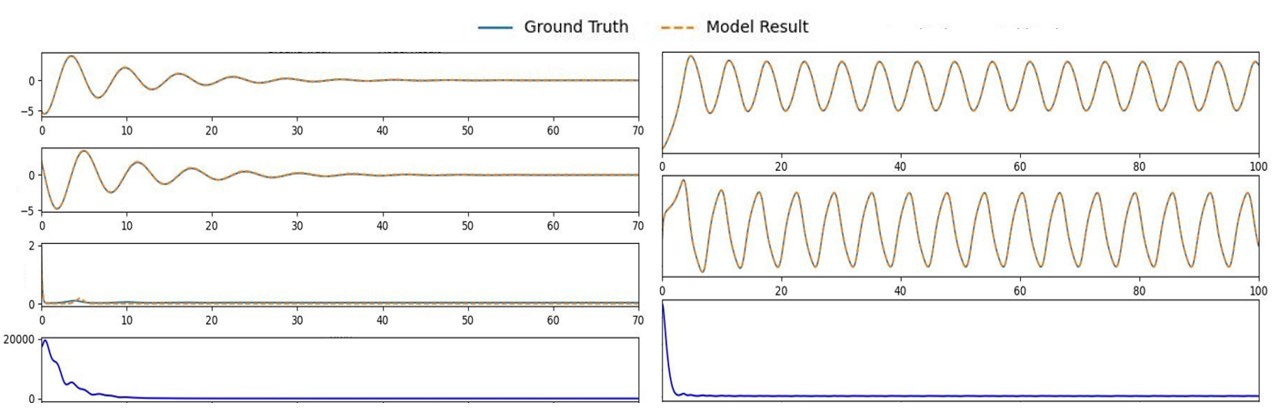}
    \put(-1.5,25.5){\footnotesize$\bf x(t)$}
    \put(-1.5,18){\footnotesize$\bf y(t)$}
    \put(-1.5,10.5){\footnotesize$\bf z(t)$}
    \put(-0.5,3){\footnotesize$\bf \lambda_1$}
    \put(48,-1.5){Time}
     \put(99,5){\footnotesize$\bf \lambda_1$}
     \put(99,24){\footnotesize$\bf x(t)$}
    \put(99,14){\footnotesize$\bf y(t)$}
    \put(-1.5,29){(a)}
    \put(99,29){(b)}
    \end{overpic}
\caption{Panel (a) illustrates the Rössler system converging to a centrally located fixed point under the parameter setting $a<0$, $b=0.2$, $c=5.7$. Panel (b) depicts the Van der Pol oscillator exhibiting a stable limit cycle for $\mu = 0.8$.
\label{fig:ros_vdp}}
\end{figure}

Below we discuss two cases of when the system converges to a fixed point or to a limit cycle, using a special case of the Rossler attractor and the Van der Pol oscillator respectively. The Rössler attractor, introduced by Otto Rössler in 1976, represents one of the simplest continuous-time dynamical systems capable of exhibiting low-dimensional chaos~\cite{rossler1976equation}. Formulated as three coupled nonlinear ODEs, the system exhibits linear growth, nonlinear twisting, and weak coupling that together produce its characteristic spiral chaotic attractor. Although the standard parameter regime yields sustained chaotic oscillations, the Rössler system also contains special parameter ranges in which trajectories converge to a stable fixed point or a simple periodic orbit, making it an important canonical example for illustrating transitions from steady-state behavior to complex chaos~\cite{sprott2003chaos}. Due to the structural simplicity and the clarity of its bifurcation structure, we will be using it as a testbed in both equilibrium state here and in chaotic settings later.

It is given by \cite{rossler1976equation}

\begin{align}
\dot{x_1} &= -x_2 - x_3 \nonumber \\
\dot{x_2} &= x_1 + ax_2 \\
\dot{x_3} &= b + x_3(x_1 - c) \nonumber
\end{align}

And the Van der Pol oscillator is given by 
\begin{equation}
\ddot x - \mu(1-x^2)\dot x + x = 0
\end{equation}

In Figure~\ref{fig:ros_vdp}, the largest eigenvalue of the Fisher information matrix is plotted against a sliding window of data blocks from input. From both cases, we observe a peak of information gain on the starting trajectory.

\subsubsection{Chaotic systems}

In contrast to the non-chaotic settings above, the FIM patterns for chaotic PDEs are more complicated. For the Lorenz system~\cite{sparrow2012lorenz} with two sets of initial
conditions, one closer to and one further away from the orbit, the training data is splitted into blocks with 20 measurements in each block. Then the largest eigenvalues of the FIM is plotted together with model simulations in Figure~\ref{fig:lor_1}. It's straightforward to observe that the pattern of the largest eigenvalues have variations between different sets of initial conditions. For initial settings with longer trajectories into the orbit, we would generally observe a spike of information gain at the start of the trajectory, while for other initial settings we would not observe much information gain at the beginning, but majorly near lobe switches instead.




\begin{figure}[t]
\centering
\begin{overpic}[scale=0.6]{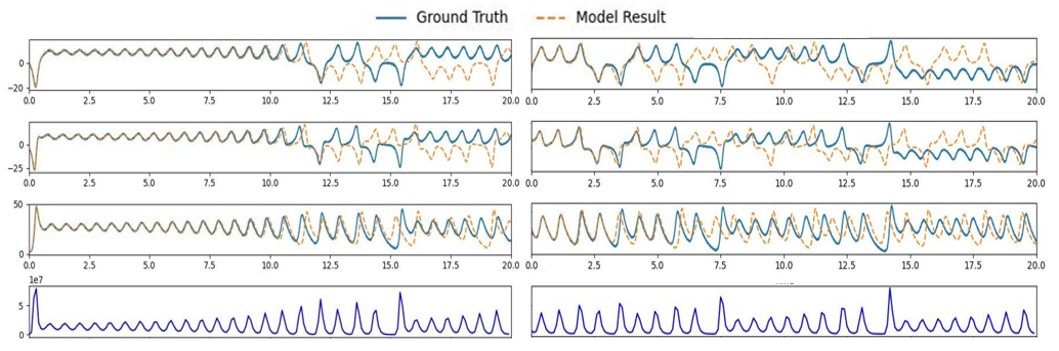}
    \put(-1.5,26){\footnotesize$\bf x(t)$}
    \put(-1.5,18.5){\footnotesize$\bf y(t)$}
    \put(-1.5,11){\footnotesize$\bf z(t)$}
    \put(-0.5,3.5){\footnotesize$\bf \lambda_1$}
    \put(48,-1.5){Time}
     \put(99,26){\footnotesize$\bf x(t)$}
    \put(99,18.5){\footnotesize$\bf y(t)$}
    \put(99,11){\footnotesize$\bf z(t)$}
    \put(99,3.5){\footnotesize$\bf\lambda_1$}
    \put(-1.5,29){(a)}
    \put(99,29){(b)}
    \end{overpic}
\caption{The largest eigenvalues $\lambda_1$ of the Fisher Information Matrix (FIM) are shown alongside model simulations for the chaotic Lorenz system. Panel (a) corresponds to a set of initial conditions farther from the orbit, while the panel (b) corresponds to initial condition closer to the orbit.
\label{fig:lor_1}}
\end{figure}

Switching to another chaotic system, we investigated how different initial conditions affect the amount of information gained from the initial observations. Using the chaotic Rössler attractor as a test case, we considered several representative initial-condition settings. When the initial condition lies close to the attractor’s orbit, the information gained from the starting data is minimal. In contrast, when the initial condition is placed further away from the orbit, the starting data becomes significantly more informative, leading to a marked increase in information gain. And apart from using solely the largest eigenvalue of the FIM as metric, we could also use skewness of its spectrum or a combination of both as metrics for information score, with similar observation that different initial conditions yield different information pattern.

\subsection{FIM pattern for multiple initial settings }

The drastic difference between initial conditions in terms of model efficiency as exhibited in the examples above naturally brings a new question - are all those disparities in learning outcomes due to informational differences? It calls for further investigation of the relationship between our info-metrics and model efficiency on a spatio-grid (and spatio-temporal grid for PDEs). In this section we interpret the coefficient loss as a mixture of two components - systematic loss, which occurs during the data collection process due to the formation of
the dynamical system itself, and informational loss, which occurs due to the trapping of
small magnitudes in the middle of the trajectories.

We plot on 2D grids of initial conditions as a visualization of the three losses - with different z-value choices of multiple dynamical systems.

\subsubsection{The Lorenz System}

When applying SINDy to the Lorenz system

\begin{align}
&\dot{x_1} = \sigma(x_2 - x_1) \nonumber \\
&\dot{x_2} = x_1(\rho - x_3) - x_2\\
&\dot{x_3} = x_1x_2 - \beta x_3, \nonumber
\end{align}

with hyperparameter setting $\sigma=10, \beta=2.66667, \rho=28$, we observe drastic differences in learnability from different initial settings. It's captured by comparing the $L_2$ coefficient loss learned from same SINDy model after acquiring the same amount of training data generated with different initials. The following plots provide an example of how SINDy-learned coefficients can distribute very differently with slight perturbations on the initials.

When different initials are plotted on a 2D-grid, an "X-shape" region that suffers high coefficient loss appears clearly near the origin. It reveals the less favorable starting points of a trajectory for a data-driven approach like SINDy. For the Lorenz system under current hyperparameter settings, such pattern is especially significant for z values near 9.

\begin{figure}[t]
\centering
\begin{overpic}[scale=0.52]{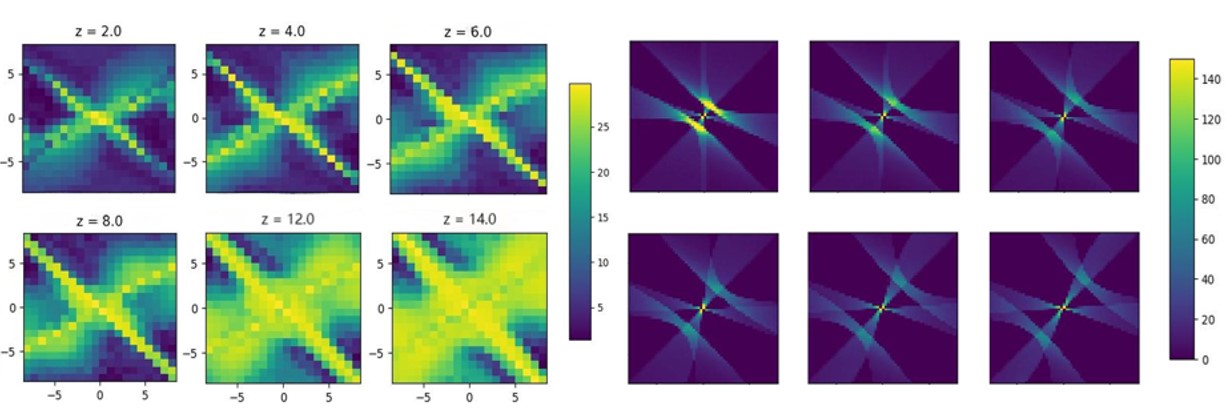}
    \put(-1.5,31){(a)}
    \put(48,31){(b)}
    \end{overpic}
\caption{The plots are defined on two-dimensional grids of initial conditions spanning different $(x, y)$ combinations, with distinct choices of the $z$-value across panels. Panel (a) presents the $L_2$ coefficient error (loss) of the Lorenz system obtained from training datasets of equal length, illustrating how reconstruction accuracy varies with the informativeness of the initial conditions. Panel (b) shows the number of extreme values encountered in training stage using data of the same length, serving as a proxy for local chaotic sensitivity and numerical stiffness. The $z$-value selections in panel (b) correspond to those used in the respective panels of (a), enabling a direct comparison between sampling discriminability, chaotic sensitivity, and model identification performance.
\label{fig:lor_2}}
\end{figure}

Our research reveals that the "X-shape" loss region is formed by two components - systematic loss and information loss. The systematic loss occurs during the data collection process due to the formation of the dynamical system itself. It impedes the learnability of SINDy model mainly with high computational bias induced by the surging of extreme values near the starting point of trajectories. It is less sensitive to perturbations on initial conditions. The informational loss, on the contrary, occurs due to the trapping of small magnitudes in the middle of the trajectories (which will be further explained in the following section), and is usually more sensitive to perturbations on the initials.

Firstly, on the systematic loss, the initial conditions near $y=x$ would lead to $\dot{x_1} \approx 0$ at the start of the trajectory, therefore causing computational bias on the training data. Such loss cannot be revealed by metrics from the FIM itself because it appears during the data collection process. A display on the extreme values of the derivative matrix (which is a component for the computation of the FIM) could provide a glimpse of how systematic loss plays a part in the impairment of SINDy. With different initials plotted on a 2D-grid, the right graph of Figure~\ref{fig:lor_2} shows the number of extreme values (either extremely small or large) in the derivative matrix at the starting region of the trajectories (within first 200 time steps). It coincide with the loss pattern near $y=x$ in the left graph of Figure~\ref{fig:lor_2}.

Secondly, the informational loss occurs for more complicated reasons within the trajectories. Initial conditions with $y \approx -x$ would lead to $\dot{x_1} \approx -2\sigma x_1$ and $\dot{x_2} = x_1(\rho + 1 - x_3)$. With $z \approx 9$ we would have $y \approx -x$ and $\dot{x_1} \approx -\dot{x_2}$ with small magnitudes at the beginning of the trajectory. It would stuck on $y=-x$ at small magnitude for a prolonged time. As a result, since y and x terms can be exchanged for low validation loss, it would falsify SINDy to include significantly more redundant library terms.

Such informational loss can be measured by our analysis in the prior sections, and our model uses metrics from the FIM to identify these less informational regions and guide the trajectories past them. With a fixed sampling interval, we could plot how much information is gained over the same time period under different initial conditions. And in Figure~\ref{fig:lor_3} we use a combination of the largest eigenvalue and the spectrum skewness of the FIM as the information metric. The lighter areas are initial conditions richer in information and darker areas are the contrary. The visualization coincides with the loss pattern near $y=-x$ in Figure~\ref{fig:lor_2}.

\begin{figure}[t]
\centering
\begin{overpic}[scale=0.57]{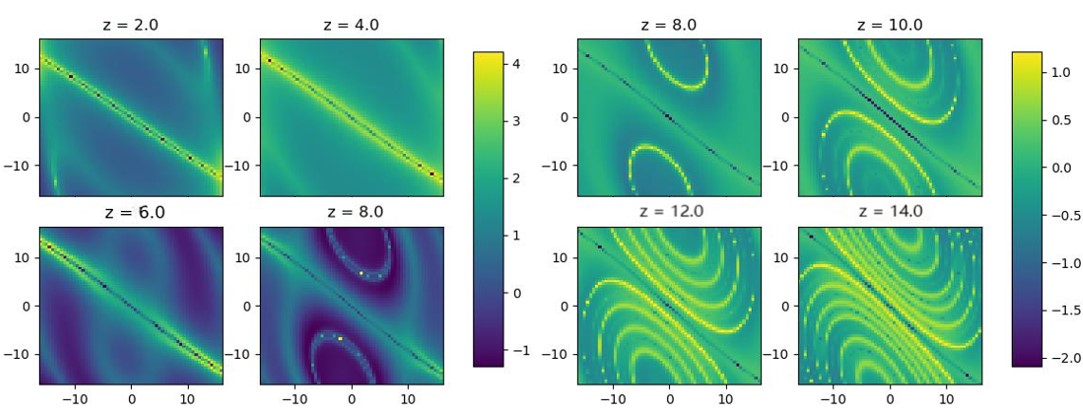}
    \put(-1.5,35){(a)}
    \put(48,35){(b)}
    \end{overpic}

\caption{The plots are defined on two-dimensional grids of initial conditions spanning different $(x, y)$ combinations, with distinct choices of the $z$-value across panels. The resulting information patterns of the Lorenz system under varying initial conditions are visualized using the largest eigenvalue and spectral skewness of the Fisher Information Matrix (FIM) as quantitative metrics. These measures characterize how informative each initial condition is for model identification, linking regions of higher information score to enhanced discriminative sampling capability and increased sensitivity to the underlying chaotic dynamics.
\label{fig:lor_3}}
\end{figure}

\subsubsection{The Rossler Attractor}

The Rossler attractor, which originally appeared from Otto Rossler's work on chemical kinetics is given by \cite{rossler1976equation}
\begin{align}
&\dot{x_1} = -x_2 - x_3 \nonumber \\
&\dot{x_2} = x_1 + ax_2 \\
&\dot{x_3} = b + x_3(x_1 - c) \nonumber
\end{align}

It exhibits chaotic behavior under hyperparameter setting $a = 0.2, b = 0.2, c = 5.7$. Similarly, when different initials are plotted on a 2D-grid, an "U-shape" region that suffers high coefficient loss appears clearly around the positive x-axis. It reveals the less favorable starting points of a trajectory for a data-driven approach like SINDy. In Figure~\ref{fig:ros_2}, similar pattern is exhibited across different choices of z-values for the initial.

\begin{figure}[t]
\centering
\begin{overpic}[scale=0.5]{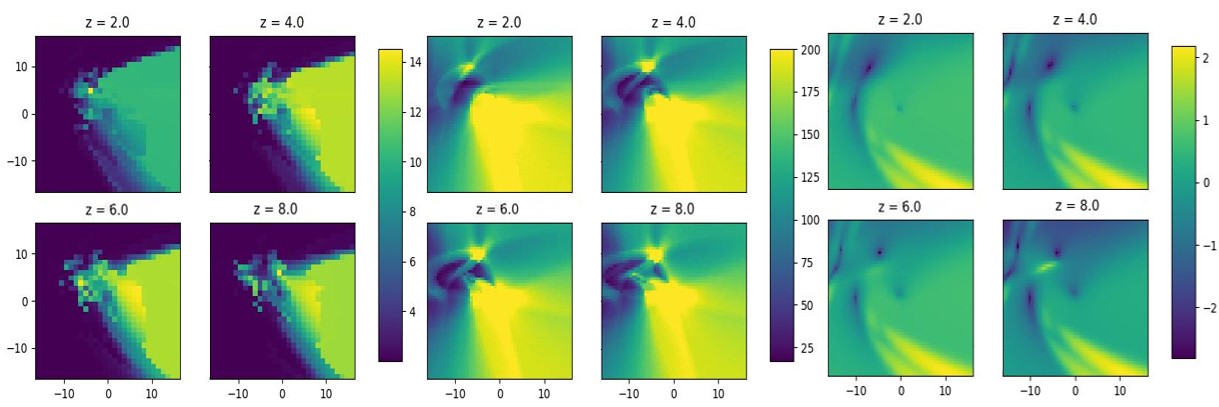}
    \put(-0.5,31.5){(a)}
    \put(32,31.5){(b)}
    \put(65,31.5){(c)}
    \end{overpic}

\caption{The plots are defined on two-dimensional grids of initial conditions spanning different $(x, y)$ combinations, with distinct choices of the $z$-value across panels. Panel (a) presents the $L_2$ coefficient error (loss) of the Rossler attractor obtained from training datasets of equal length; Panel (b) shows the number of extreme values encountered in training stage using data of the same length; Panel (c) presents information patterns of the Rossler attractor under varying initial conditions, using same metrics as in Fig~\ref{fig:lor_3}.
\label{fig:ros_2}}
\end{figure}

Similar to the Lorenz system above, the "U-shape" loss region is also formed by two components - systematic loss and information loss. Firstly, on the systematic loss, the initial conditions that lead to near-zero derivatives would cause computational bias on the starting trajectory of the training data. Such loss cannot be revealed by metrics from the FIM itself because it appears during the data collection process. A display on the extreme values of the derivative matrix (which is a component for the computation of the FIM) could provide a glimpse of how systematic loss plays a part in the impairment of SINDy. With different initials plotted on a 2D-grid, the middle panel of Figure~\ref{fig:ros_2} shows the number of extreme values (either extremely small or large) in the derivative matrix at the starting region of the trajectories (within first 200 time steps). It coincide majorly with the loss pattern in the left panel of Figure~\ref{fig:ros_2}.

However, there is one discrepancy in the coefficient loss plot that cannot be explained by the systematic loss plot alone, which is the good learnability in the lower region of initials on the x-y plane. Though the derivative matrix displays a surge in extreme values from the start of the trajectory, these small magnitude values are not trapped according to the plots of informational loss. Therefore they are not falsifying the SINDy algorithm to include redundant library terms. Similarly the informational loss can be plotted by our analysis in the prior sections. With a fixed sampling interval, we could plot how much information is gained over the same time period under different initial conditions. And in the right panel of Figure~\ref{fig:ros_2} we use the same information metric as in Figure~\ref{fig:lor_3}. The lighter areas are initial conditions richer in information and darker areas are the contrary. The visualization especially accounts for the good learnability of the lower region of initials on the x-y plane as in Figure~\ref{fig:ros_2}.

\begin{figure}[t]
\centering
\includegraphics[scale=0.5]{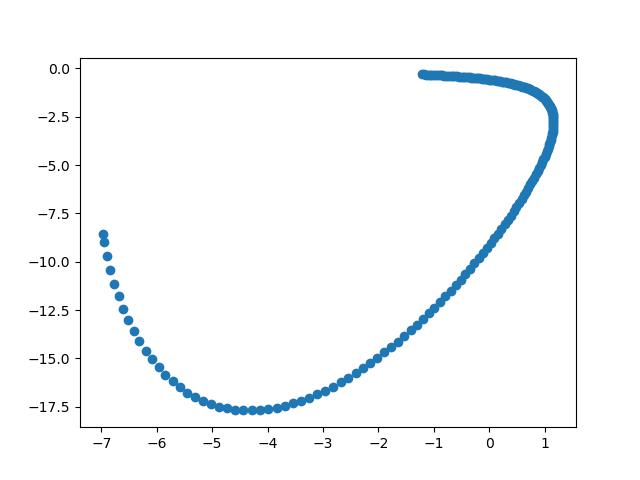}
\includegraphics[scale=0.5]{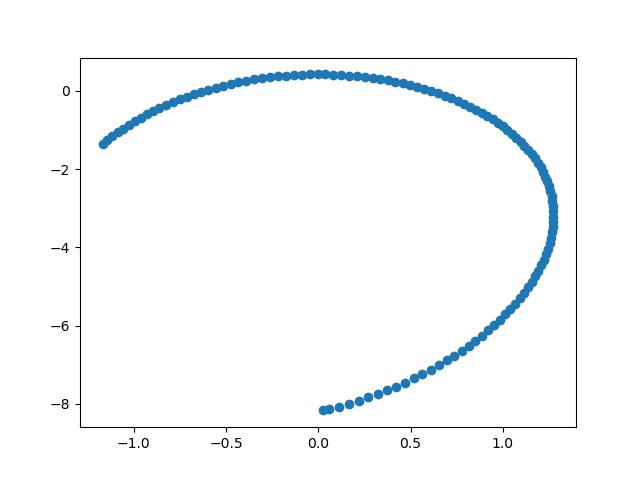}

\caption{A comparison between trapped and non-trapped trajectories on the x-z plane of derivatives. The scattered points are plotted on the x-z plane out of equally spaced time stepping.
\label{fig:trap1}}
\end{figure}

\begin{figure}[t]
\centering
\includegraphics[scale=0.28]{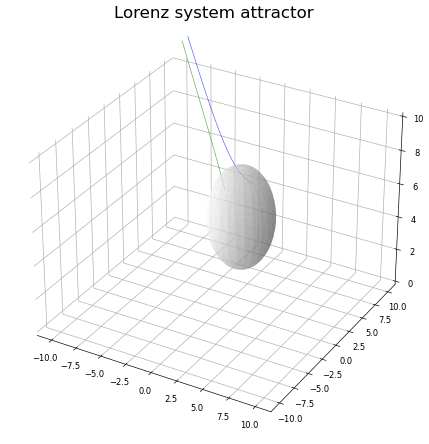}
\includegraphics[scale=0.28]{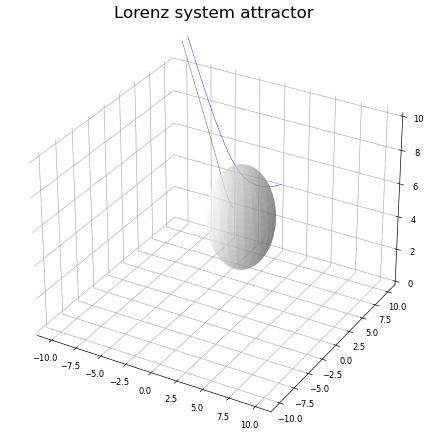}
\includegraphics[scale=0.28]{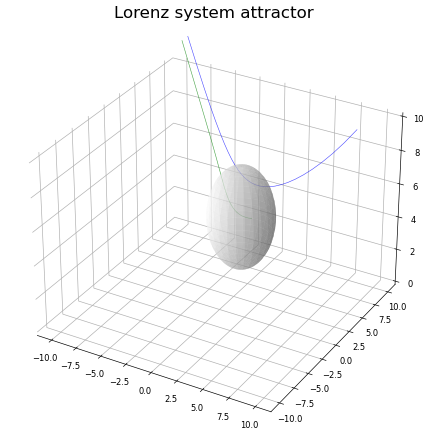}

\caption{A comparison between trapped and non-trapped trajectories on the x-z plane of derivatives. The scattered points are plotted on the x-z plane out of equally spaced time stepping.
\label{fig:trap2}}
\end{figure}

The scatter plot and 3D plot in Figure~\ref{fig:trap1} and \ref{fig:trap2} provide a comparison between trapped and non-trapped trajectories on the x-z plane. In Figure~\ref{fig:trap1}, the trajectory to the left is plotted out of initial $x=5.0, y=-5.0, z=12.0$, which suffers both high systematic loss and high informational loss. And the trajectory to the right is plotted out of initial $x=5.0, y=-12.0, z=12.0$, which suffers high systematic loss but little informational loss.

As a result, we observe drastic differences in learnability from these two initial settings, which is captured by comparing the l2 coefficient loss learned from same SINDy model after acquiring the same amount of training data generated with different initials.

\subsection{Stability of the information metric}

In order to further concretize the relationship between our metric and the information loss, we experiment 100 different initial conditions on the Lorenz system and provide scatter plots for their information metric vs L1 loss in the estimated coefficient, shown in Figure~\ref{fig:loss_dist}. The trajectory from each initial setting is fed into the SINDy model with two sets of training window, shown in the blue color (larger training data) and red color (less training data).

\begin{figure}[t]
\centering
\includegraphics[scale=0.5]{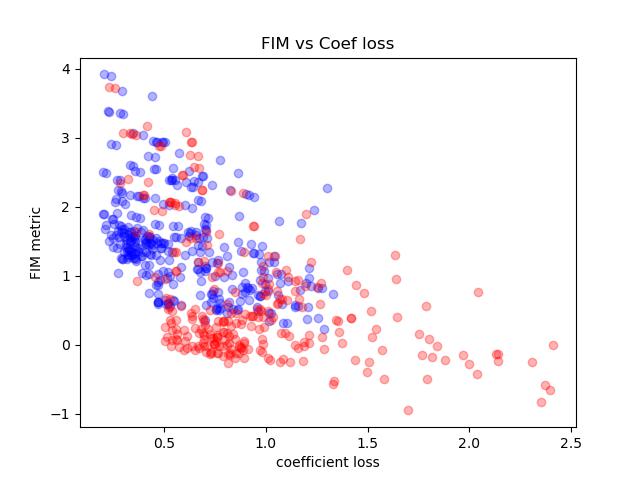}
\includegraphics[scale=0.5]{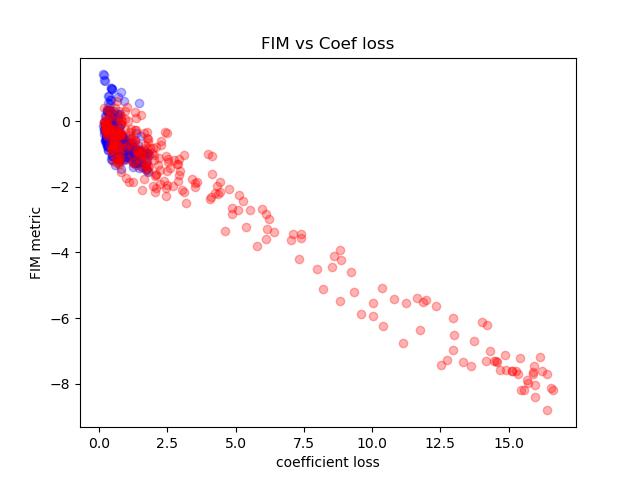}

\caption{Distribution of coefficient loss and information metrics for two combinations of training lengths. In each panel, the red group corresponds to the blue group at the specified fraction of the full training length: the left panel shows $\frac{1}{2}\times$, and the right panel shows $\frac{1}{8}\times$.
\label{fig:loss_dist}}
\end{figure}

With a prolonged training window, we observe a cluster of trajectories with similar FIM metric value and similarly small loss in coefficient. In other words, we observe little difference between initial settings after gathering a large enough set of training data. However, if we shorten the length of training window, we observe a roughly linear relationship between the metric value and the coefficient loss, where differences in learnability between initial trajectories start to be visible. Such pattern is more significant when we further shorten the training window, indicating better stability in small data scenarios.

\section{Concrete usages and performances under three scenarios}

In this section, we demonstrate concrete usages of the Fisher information and entropy metrics on promoting
data efficiency for an unknown dynamical system in three different cases - when only one trajectory is
available, when a control parameter is available for tuning, and when multiple trajectories are available
with freely chosen initials.

\subsection{On a single trajectory}

On the identification of nonlinear dynamical systems, while we could often generate inputs from repeated sampling process, in other practical settings, measurements could be expensive and trajectory might be unique. Without the option of ensembling, we could suffer from the locally flat parameter space of the FIM on a less informatory trajectory. Motivated by such scenarios, it is critical to for the identification framework to have as much exploration as possible on more informative data. Therefore, we propose techniques driven by information theory that both provide insight for the unknown system itself, and enable faster convergence in the case of "bad" initials.

\subsubsection{Interpretation from selective sampling}

Similar to the notion of self-supervised learning in robotics, some have taken an active learning approach on more complicated systems which require much larger feature spaces. This includes using learned model ensembles, whose usefulness are measured by the performances in the exploration stage \cite{shyam2019model,pathak2019self}, and using trajectory planning and tracking techniques in data collection for optimal design \cite{mania2020active}.

In data driven models, information analysis on the input data itself could also provide critical insights. In the scenario of selective sampling, we have access to all input data and try to select as few data as possible to achieve good performance. Through sampling on different trajectories, we could acquire further interpretation on the behavior of potentially unknown dynamical systems by learning the most informative portion of data from a series of measurements.

Taking the diffusion process as an example, as shown in Figure~\ref{fig:noise_masked}, for the reaction-diffusion system below with spiral waves on a periodic domain:

\begin{align}
&u_t = 0.1\nabla^2 u + (1-A^2)u +\beta A^2v\nonumber \\
&v_t = 0.1\nabla^2 v - \beta A^2 u + (1-A^2)v \\
&A^2 = u^2 + v^2. \nonumber
\end{align}

When performing optimization on the same initial condition as in the PySINDy example~\cite{kaptanoglu2021pysindy}, but using $50\%$ of the available training measurements (50 out of 100 timesteps on a $64\times 64$ grid), the resulting reconstruction is largely inadequate:

\begin{equation}
\begin{cases}
\displaystyle
\partial_t u = 0.784\, u + 0.220\, v - 0.781\, u^3 + 0.756\, v^3 - 0.784\, u v^2 + 0.754\, u^2 v + 0.082\, \nabla^2 u,\\
\displaystyle
\partial_t v = -0.258\, u + 0.783\, v - 0.715\, u^3 - 0.781\, v^3 - 0.713\, u v^2 - 0.780\, u^2 v + 0.080\, \nabla^2 v.
\end{cases}
\end{equation}

In contrast, when stochastic perturbations are introduced and masked in the initial condition, the modified initialization exhibits a substantial increase in information-theoretic metrics, including the largest eigenvalue and spectral skewness. Under these conditions, the model demonstrates improved performance even when trained on only $30\%$ of the measurements:

\begin{equation}
\begin{cases}
\displaystyle
\partial_t u = 1.009\, u - 1.004\, u^3 + 1.001\, v^3 - 1.015\, u v^2 + 0.995\, u^2 v + 0.097\, \nabla^2 u,\\[2mm]
\displaystyle
\partial_t v = 0.996\, v - 1.001\, u^3 - 0.996\, v^3 - 1.004\, u v^2 - 0.994\, u^2 v + 0.102\, \nabla^2 v.
\end{cases}
\end{equation}

\begin{figure}[t]
\centering
\begin{overpic}[scale=0.42]{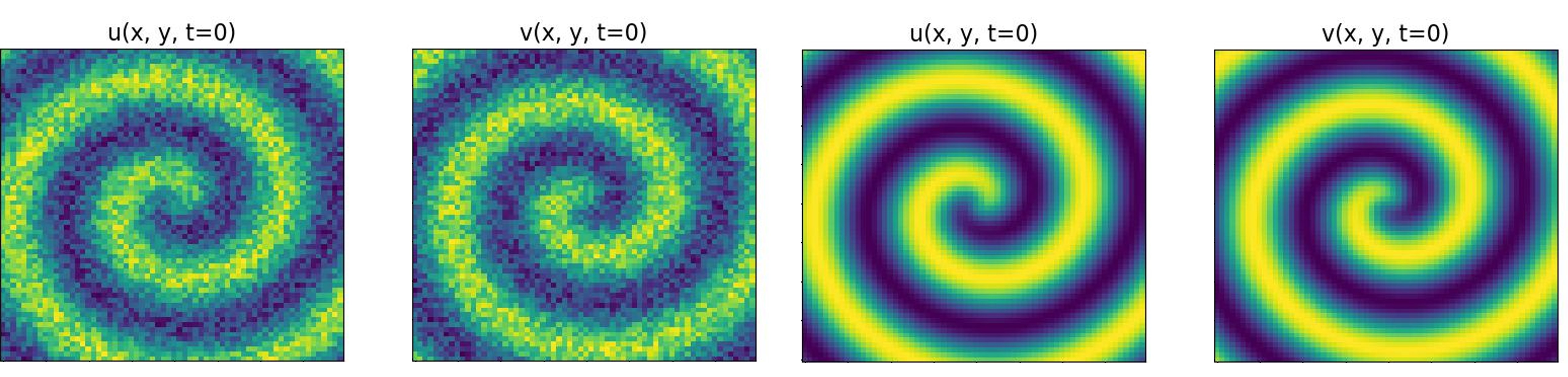}
    \put(-2,11){$\bf y$}
    \put(49,-1){$\bf x$}
    \put(-3,22){(a)}
    \put(48.5,22){(b)}
    \end{overpic}
\caption{The figure illustrates two initial conditions of the same reaction-diffusion system: (a) with masked noise in the initial condition, and (b) without noise in the initial condition, with (a) providing significantly more informative data than (b).
\label{fig:noise_masked}}
\end{figure}

This behavior is also observed in other diffusion-dominated dynamics: an initially smooth or well-structured condition frequently yields poorer early-time performance compared to a randomized or partially noisy initialization. Information-theoretic analysis suggests that the system is most effectively learned from initial conditions that are informative but not overly smoothed, thereby maximizing the entropy content of the training data. In other words, the salient features of a diffusive process are captured primarily through the system’s evolution as it organizes noise into coherent patterns, rather than from static pre-formed patterns. For example, in a reaction–diffusion system, initializing with a pre-existing spiral pattern impedes early learning relative to allowing a spiral structure to emerge dynamically from noisy data. Consequently, the evolution of information-rich features provides deeper insight into the diffusion process - and such features would generally differ for other dynamical systems, such as dispersive systems like the Schrödinger equation.

\subsubsection{Faster convergence rate from predictive sampling}

In data driven models, as is discussed in previous chapters, when a initial condition fails to provide an informative starting trajectory for certain complicated systems, we could no longer exploit the aggregated information gained from starting data for efficient parameter-finding. The information curve could be relatively flat on the input data, which prohibits us from directly observing the data blocks with the most information. For such scenario, we could either optimize the sampling approach to skip less informatory regions, or have perturbations on control parameters to gain more distortions in the parameter space.

The ultimate goal of predictive sampling is to reduce the number of measurements needed towards learning a dynamical system. In this scenario, we're trying to observe as few data as possible to provide candidate for a "good" initial condition with more information gained from the start of the trajectory. With a control parameter, we could actively learn and optimize the information gain for faster convergence, which will be further discussed in section 5.3. Without control parameters, we would be constrained onto a single trajectory and therefore it is critical to make measurements on the most informative parts of data. Thus, the predictive sampling method could arise as following, where $\beta > 1$ is a hyperparameter for sampling looseness, and $\alpha_1 > 0$ and $\alpha_2 < 0$ are thresholds for information burst or implosion:

\begin{algorithm}[t]
\caption{Adaptive Predictive Sampling for Single Trajectory Model Discovery}
\label{alg:predictive_sampling}
\textbf{Input:} Initial condition $\mathbf{x}_0$; unknown dynamics $f$; base sampling interval $\delta t$; looseness factor $\beta > 1$; information metric $\mathcal{I}(\cdot)$; burst threshold $\gamma_{\mathrm{up}} > 1$; collapse threshold $\gamma_{\mathrm{down}} \in (0,1)$; minimum data size $N_{\min}$; maximum iterations $N_{\max}$

\textbf{Output:} Training dataset $\mathcal{D}$; learned model $\hat{f}$

\begin{algorithmic}[1]
\State Initialize $\mathcal{D} \gets \{(t_0, \mathbf{x}_0)\}$, $t \gets 0$, $n \gets 1$
\State $\mathcal{I}_{\mathrm{prev}} \gets 0$ \Comment{Previous information metric value}
\State $\texttt{mode} \gets \texttt{COARSE}$ \Comment{Start with coarse sampling}

\While{$n < N_{\max}$}
    \If{$\texttt{mode} = \texttt{COARSE}$}
        \State $t \gets t + \beta \cdot \delta t$ \Comment{Loose sampling interval}
    \Else
        \State $t \gets t + \delta t$ \Comment{Tight sampling interval}
    \EndIf
    
    \State Observe $\mathbf{x}_n \gets \mathbf{x}(t)$ from dynamics $f$
    \State $\mathcal{D} \gets \mathcal{D} \cup \{(t, \mathbf{x}_n)\}$
    \State Fit model: $\hat{f} \gets \texttt{SINDy}(\mathcal{D})$
    \State Compute $\mathcal{I}_n \gets \mathcal{I}(\mathcal{D}, \hat{f})$ \Comment{e.g., $\lambda_1(\mathbf{I})$ or combined metric}
    
    \If{$\texttt{mode} = \texttt{COARSE}$ \textbf{and} $\mathcal{I}_n > \gamma_{\mathrm{up}} \cdot \mathcal{I}_{\mathrm{prev}}$}
        \State $\texttt{mode} \gets \texttt{FINE}$ \Comment{Information burst detected}
    \ElsIf{$\texttt{mode} = \texttt{FINE}$ \textbf{and} $\mathcal{I}_n < \gamma_{\mathrm{down}} \cdot \mathcal{I}_{\mathrm{prev}}$}
        \State $\texttt{mode} \gets \texttt{COARSE}$ \Comment{Information collapse detected}
    \EndIf
    
    \State $\mathcal{I}_{\mathrm{prev}} \gets \mathcal{I}_n$
    \State $n \gets n + 1$
    
    \If{$|\mathcal{D}| \geq N_{\min}$ \textbf{and} model quality criterion satisfied}
        \State \textbf{break}
    \EndIf
\EndWhile

\State \Return $\mathcal{D}$, $\hat{f}$
\end{algorithmic}
\end{algorithm}

Intrinsically we only start tight sampling until seeing a burst in the information metric, and stop on the observation of a collapse. In practice, we may also adjust the stopping criteria to accomodate for continuous increase or descent.

\subsubsection{On an active initial trajectory}

From the results in the previous sections, it is straightforward to observe that under certain initial conditions, the rise in information starts from the very beginning of the trajectories. Such phenomenon becomes more significant for initial settings that have longer transient stage towards oscillations at the start of the trajectory. In such cases, the algorithm would acquire training data in a tightly manner from the initial trajectory and give satisfiable results, as shown for the example of the Lorenz equation in Figure~\ref{fig:lor_tractor}.

\paragraph{Example 1: Lorenz system \\ \\}

\begin{figure}[t]
\centering
\begin{overpic}[scale=0.48]{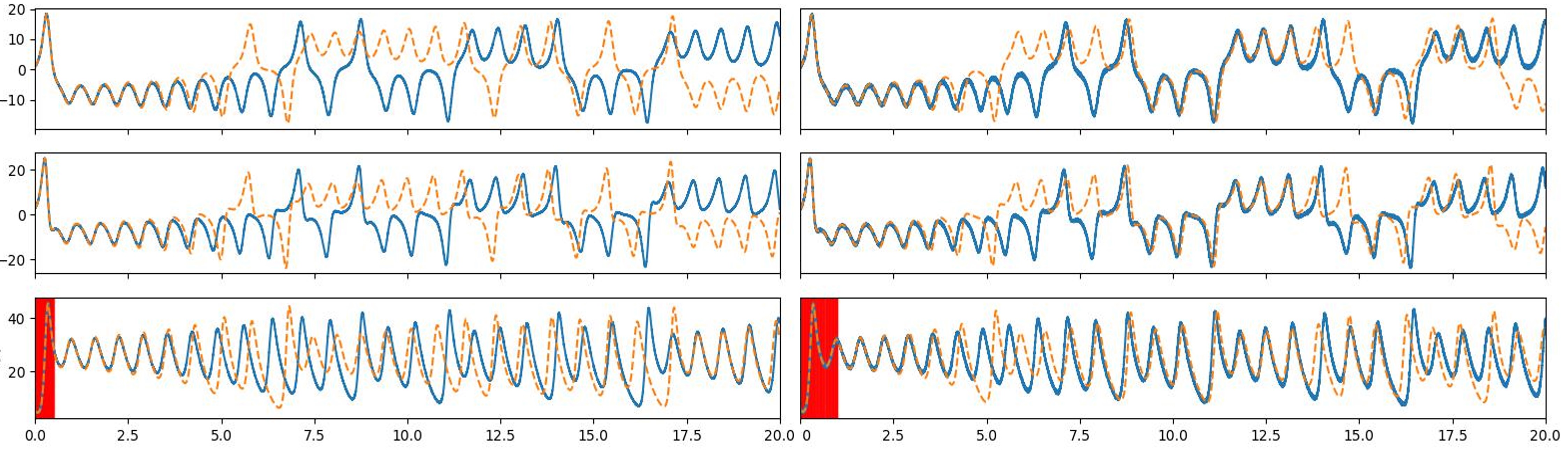}
    \put(-2.5,24.5){\footnotesize$\bf x(t)$}
    \put(-2.5,15){\footnotesize$\bf y(t)$}
    \put(-2.5,6){\footnotesize$\bf z(t)$}
    \put(48,-1.5){Time}
     \put(99,24.5){\footnotesize$\bf x(t)$}
    \put(99,15){\footnotesize$\bf y(t)$}
    \put(99,6){\footnotesize$\bf z(t)$}
    \put(-2.5,29){(a)}
    \put(49,29){(b)}
    \end{overpic}
\caption{Two models are trained on the Lorenz system using the same initial condition located far from the attractor, but with varying lengths of training datasets. Panel (a) shows the case where only data leading up to (but not including) the first oscillation are used for training. Panel (b) shows the case where the training data include both the transient leading to and the first oscillation itself.
\label{fig:lor_tractor}}
\end{figure}

To reveal how discriminative sampling could affect model performance, Figure~\ref{fig:noise_resist} compares the performances of SINDy with different sampling strategies of data for two sets of initial conditions of the Lorenz system. As revealed from the graphs, the data leading to the first cycle could provide information that yields better results than a random selection of the same amount of data from later oscillations. And at different noise levels, the SINDy framework on starting data gives better performance than a random selection of more data. When only data leading to and including the first oscillation are used for training, the method could stably withstand a noise level higher than the case when only data leading to but excluding the first oscillation are seen. 

However, another important metric to consider that is not shown from the graph would be the number of outliers. Compared with an average of 30 percent outliers when seeing only the starting 10 percent of data, the amount of outliers drastically reduces to well below 10 percent when the training data includes the first oscillation.

\begin{figure}[t]
\centering
\begin{overpic}[scale=0.58]{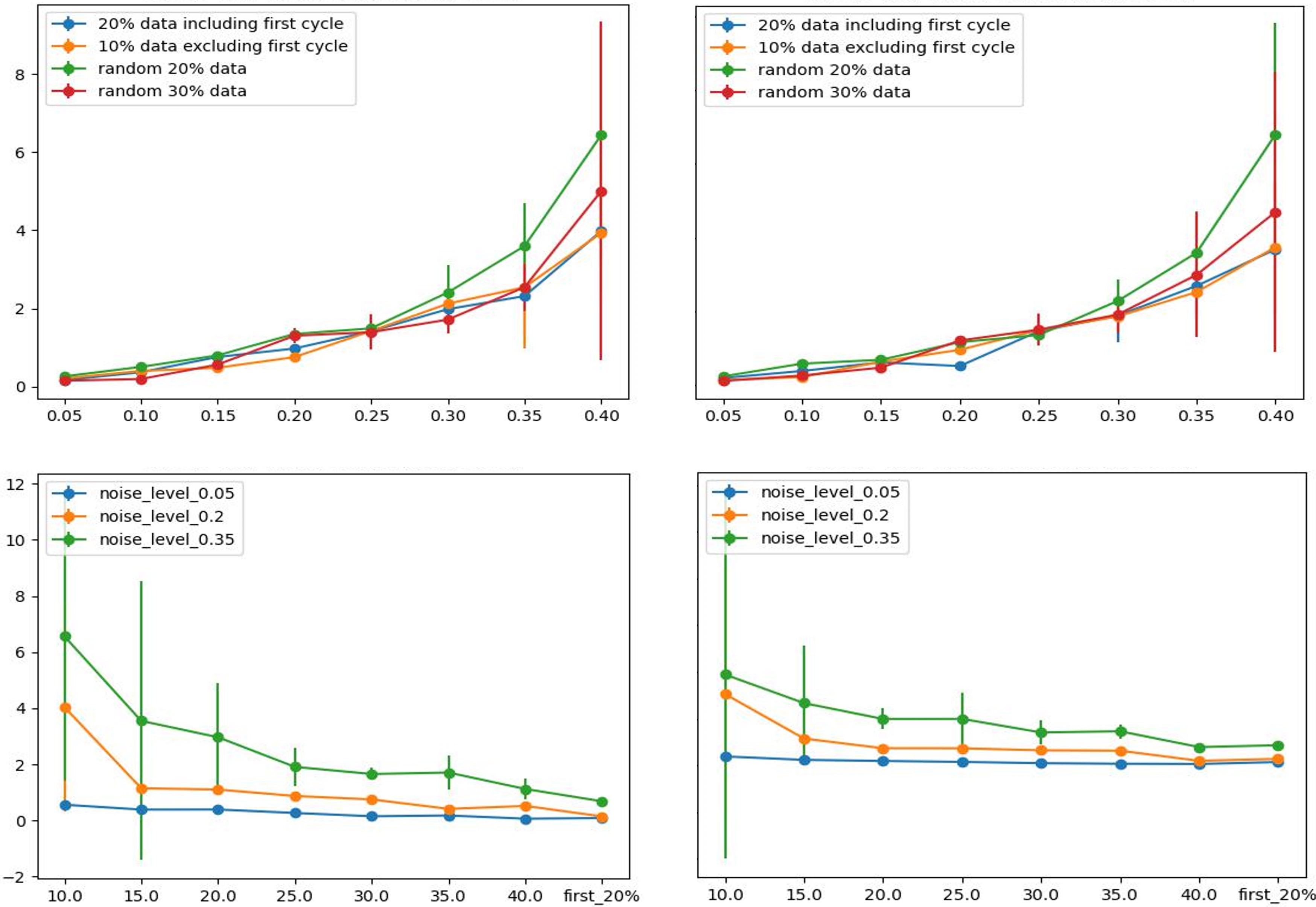}
    \put(44,34.5){Noise Level}
    \put(42,-2){Sampling Percentage}
    \put(-2,68){(a)}
    \put(49,68){(b)}
    \put(-3,20){$L_1$}
    \put(-3.5,17){loss}
    \put(-3,55){$L_1$}
    \put(-3.5,52){loss}
    \end{overpic}
    \vspace{1em}
\caption{Top panels: mean and variance of $L_1$ coefficient loss versus noise levels for SINDy trained on selected data samplings of the Lorenz system. Bottom panels: mean and variance of $L_1$ coefficient loss versus sampling percentages at selected noise levels. Left panels (a) correspond to initial condition $[1,3,5]$; right panels (b) correspond to initial condition $[3,3,5]$. A few outliers were automatically excluded.
\label{fig:noise_resist}}
\end{figure}

\paragraph{Example 2: Rössler attractor \\ \\} 

Applying the same algorithm on the Rössler attractor with coefficients a=0.2, b=0.2 and c=5.7 (where it has chaotic behavior) yields similar results, as shown in Figure~\ref{fig:pred_samp}(a). At various noise levels, predictive sampling method would yield data from the start of the trajectory, which outperforms a random selection of input data.

\begin{figure}[t]
\centering
\begin{minipage}{0.72\linewidth}
  \begin{subfigure}{\linewidth}
    \begin{overpic}[width=\linewidth]{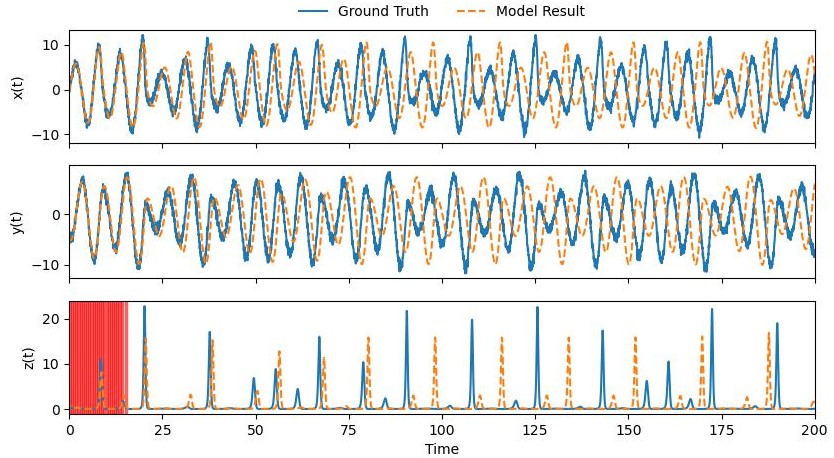}
    \put(2,55){\textbf{(a)}}
    \end{overpic}
    \caption*{}\label{subfig:key-a}
  \end{subfigure}
\end{minipage}
\hfill
\begin{minipage}{0.25\linewidth}
  \begin{subfigure}{\linewidth}
  \begin{overpic}[scale=0.4]{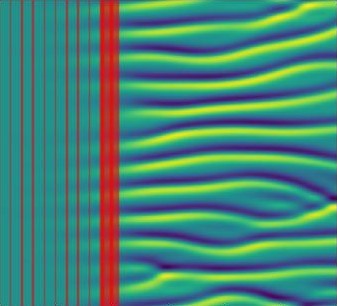}
  \put(34,95){$\bf v(x,t)$}
  \put(45,-8){$\bf t$}
    \put(-10,40){$\bf x$}
    \put(-15,98){\textbf{(b)}}
    \end{overpic}
    \caption*{}\label{subfig:key-b}
  \end{subfigure}

  \medskip

  \begin{subfigure}{\linewidth}
    \begin{overpic}[scale=0.4]{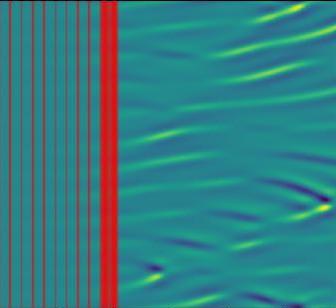}
    \put(34,95){$\bf \dot{v}(x,t)$}
    \put(45,-8){$\bf t$}
    \put(-10,40){$\bf x$}
    \end{overpic}
    \caption*{}\label{subfig:key-c}
  \end{subfigure}
\end{minipage}

\caption{The result of predictive sampling method applied on Rossler attractor (a) and  Kuramoto–Sivashinsky (KS) equation (b). The method would quickly skip the less informative region by collecting data only from the highlighted regions.
\label{fig:pred_samp}}
\end{figure}

\subsubsection{On a silent initial trajectory}

A good example where the dynamics remain silent at the starting trajectory could be the famous Kuramoto–Sivashinsky (KS) equation\cite{ashinsky1988nonlinear}, one of the simplest canonical partial differential equation models capable of generating rich spatio-temporal chaos, despite its deceptively compact form. Originally derived in the study of diffusive instabilities in hydrodynamic flame fronts, the KS equation features linear growth, nonlinear advection and fourth-order dissipation that lead to chaotic dynamics even in one spatial dimension.~\cite{hyman1986kuramoto}

\begin{align}
v_t + v_{xx} + v_{xxxx} + vv_x = 0
\end{align}

We use it as a testbed for trajectories with silent start from a random initialization, as shown in Figure~\ref{fig:pred_samp}(b) below. The predictive sampling method would quickly skip the less informative data by collecting data only from the highlighted regions, before observing a spike of increase in the information metric.

Through this approach, we encouraged the model to focus on information-optimized segments of the trajectory, enabling effective learning with substantially fewer measurements. In experiments on the Kuramoto–Sivashinsky and reaction–diffusion systems, the number of required measurements was reduced to less than one quarter of the original amount while still achieving successful identification of the underlying dynamics.

A similar strategy is expected to be applicable to the FitzHugh–Nagumo model~\cite{rocsoreanu2012fitzhugh}, where informative transient dynamics can likewise be exploited for efficient system identification.

\subsection{SINDy with control}

In the field of control theory, Eurika Kaiser et al \cite{kaiser2018sparse} provides good indication of how perturbations could be applied for quick learning in multiple dynamical systems. This could be extremely useful on scenarios that requires fast recovery from unknown dynamics, such as self-protection systems of motors to avoid crashes. When the Koopman operator representation of a quadcopter model is actively learned, the recovery time after one of the rotors fails could be drastically reduced.\cite{taylor2021active}

In terms of information theory under the SINDy framework, we could also actively learn and optimize the Fisher information at each step with respect to an unknown model. In an experiment on the Lorenz system, a steady perturbation parameter is kept until the model gets good prediction horizon feedback. Then, until stopping criteria on prediction quality is met, the perturbation parameter is iteratively updated based from info-optimization on a combination of new data and simulations of learned models in the past. Such design clears the need for parallel measurements, i.e. multiple perturbations exerted on the system at the same moment. A 2-percent noise is included, and Figure~\ref{fig:control} presents the model results compared with baselines from Eurika's paper \cite{kaiser2018sparse} in terms of coefficient error and prediction error.

\begin{figure}[t]
\centering
\begin{overpic}[scale=0.58]{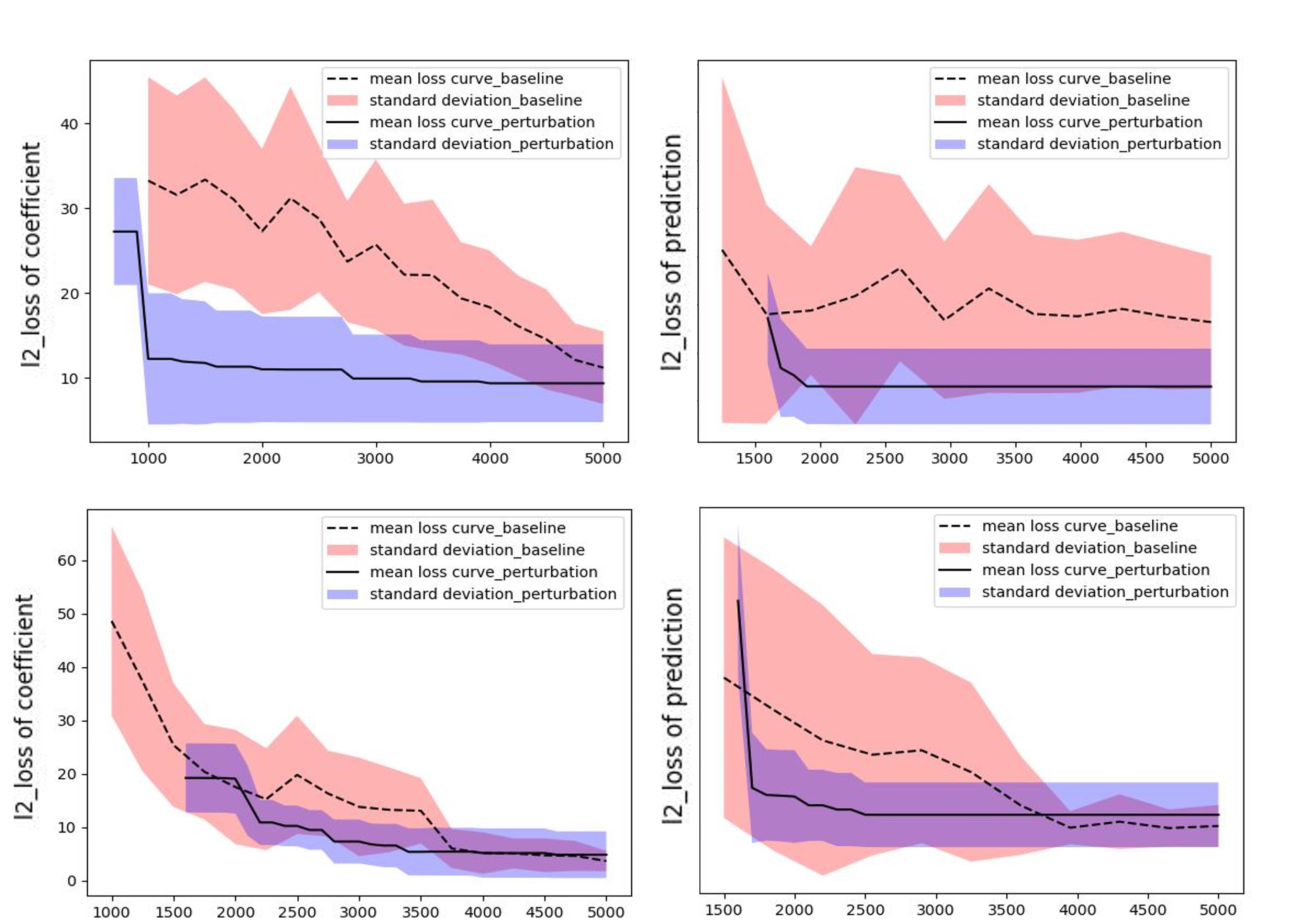}

    \put(42,-2){Training data length}
    \put(2,65){(a)}
    \put(2,33){(b)}
    \end{overpic}
        \vspace{1em}
\caption{Model results for active information-driven SINDy with control on the Lorenz system with a 2\% perturbation noise. Panel (a) corresponds to an initial condition close to the orbit, and Panel (b) corresponds to an initial condition farther from the orbit. The perturbation parameter is iteratively optimized based on Fisher information from new data and previously learned simulations until the prediction quality stopping criteria is met. Performance is compared with baselines from Eurika et al.~\cite{kaiser2018sparse}. The control parameter converges quickly, significantly accelerating model discovery for the less informative initial condition (a), while improvements are more modest for the richer initial condition (b).
\label{fig:control}}
\end{figure}

It could be observed that the control parameter converges quickly within a few iterations to meet the stopping criteria. It significantly accelerates the model discovery process when the initial condition is close to the orbit - a less informative initial condition discussed in previous parts of this paper. But when the initial condition is further off the orbit, model performance is improved by a relatively smaller amount in both metrics, due to the rich of information from the initial trajectory itself.

\subsection{Entropy search SINDy}

\begin{figure}[t]
\centering
\includegraphics[scale=0.5]{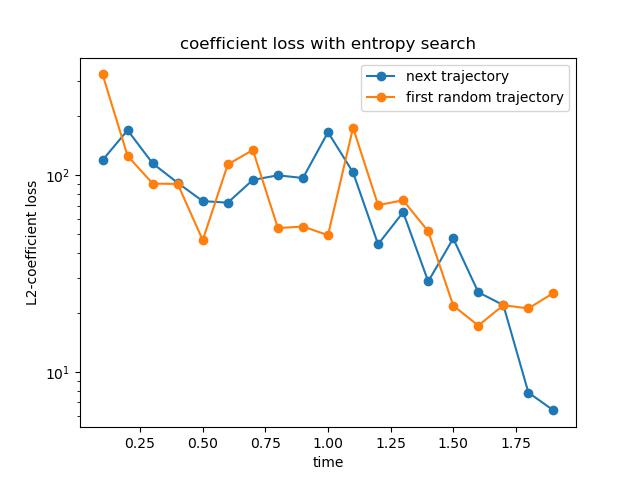}
\includegraphics[scale=0.5]{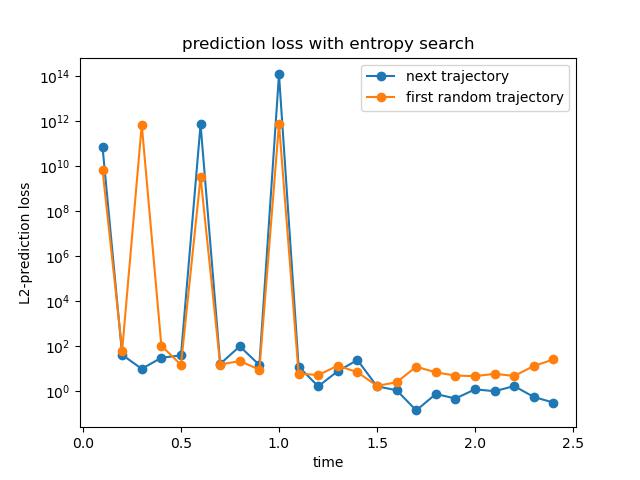}

\caption{The reduction in L2-coefficient loss and prediction loss when going from a randomly initialized trajectory to a more informative trajectory
\label{fig:search}}
\end{figure}

When initial trajectories can be freely placed on the domain for data collection, it would be a lot more efficient to search for a more informative initial trajectories, in contrast to randomly choosing the next ones.

As introduced in section 2.3.5, the family of entropy-based acquisition functions have been used in Bayesian statistical methods for more efficient and robust choices of evaluation points. In one of the recent works, Fröhlich et al\cite{frohlich2020noisy} proposed the noisy-input version of the entropy search function that account for not only measurement noises, but input noises as well.

Characterizing the perturbations on the input parameter with $p(\xi)\sim N(0, \Omega_x)$, the robust objective then becomes

\begin{equation}
    g(x) = E_{\xi\sim p(\xi)} [f(x + \xi)]
\end{equation}

instead of the original unknown objective f. And the acquisition function then is aimed at maximizing the mutual information between x and the robust objective's maximum value $g^* = max_{x\in \mathcal{X}} g(x)$, and therefore becomes

\begin{equation}
    \alpha_{NES}(x) = H[p(y|x,Data)]- E_{p(g^*|Data)}[H[p(y|x,g^*,Data)]]
\end{equation}

where the first term is the same as in the predictive entropy search (PES) method discussed in section 2.3.5, and can therefore be computed as the entropy of the posterior from a Gaussian process prior. And in the second term, the expectation over robust maximum can be approximated via any Monte Carlo sampling technique, and the posterior conditioned on the optimum can still be efficiently approximated by the expectation propagation method.

\begin{algorithm}[t]
\caption{General Bayesian Optimization}
\begin{algorithmic}
\State \textbf{Input:} a black box with an unknown objective function $f$

\State \textbf{For} $n = 1,2,\ldots,N$:
\begin{list}{}{\leftmargin=2em}  
    \item Select $x_n = \arg\max_{x \in \mathcal{X}} \alpha_{n-1}(x)$, where $\alpha_{n-1}(x)$ is the acquisition function for the next evaluation
    \item Use the Bayesian surrogate model at $x_n$ to approximate $y_n$
    \item Augment data: $U_n = U_{n-1} \cup \{(x_n, y_n)\}$
\end{list}

\end{algorithmic}
\end{algorithm}

\begin{algorithm}[t]
\caption{Entropy Search SINDy for Multi Trajectory Model Discovery}
\label{alg:entropy_search_sindy}
\textbf{Input:} Domain $\mathcal{X}$ for initial conditions; unknown dynamics $f$; number of trajectories $M$; samples per trajectory $N$; FIM-based metric $\mathcal{M}(\cdot)$; Gaussian process prior $\mathcal{GP}$

\textbf{Output:} Aggregate dataset $\mathcal{D}$; learned model $\hat{f}$
\begin{algorithmic}[1]
\State Sample initial condition $\mathbf{x}_1^{(1)} \sim \mathrm{Uniform}(\mathcal{X})$
\State Initialize trajectory $\mathcal{D}_1 \gets \{(t_0, \mathbf{x}_1^{(1)})\}$
\State Initialize aggregate dataset $\mathcal{D} \gets \mathcal{D}_1$
\For{$m = 1, 2, \ldots, M$}
    \State \textbf{// Phase 1: Temporal sampling within trajectory $m$}
    \For{$n = 1, 2, \ldots, N$}
        \State Update surrogate: $\mathcal{GP} \gets \texttt{FitGP}(\mathcal{D})$
        \State Compute acquisition: $\alpha_{n-1}(t) \gets H[p(\mathbf{y}|t, \mathcal{D})] - \mathbb{E}_{p(\boldsymbol{\xi}|\mathcal{D})}[H[p(\mathbf{y}|t, \boldsymbol{\xi}, \mathcal{D})]]$
        \State Select next time point: $t_n \gets \arg\max_{t \in \mathcal{T}} \alpha_{n-1}(t)$
        \State Observe $\mathbf{x}_n^{(m)} \gets \mathbf{x}^{(m)}(t_n)$ from dynamics $f$
        \State $\mathcal{D}_m \gets \mathcal{D}_m \cup \{(t_n, \mathbf{x}_n^{(m)})\}$
        \State $\mathcal{D} \gets \mathcal{D} \cup \{(t_n, \mathbf{x}_n^{(m)})\}$
    \EndFor
    
    \State \textbf{// Phase 2: Select next initial condition}
    \If{$m < M$}
        \State Train ensemble model: $\hat{f} \gets \texttt{Ensemble-SINDy}(\mathcal{D})$
        \State Compute aggregate FIM: $\mathbf{I}_{\Sigma} \gets \sum_{k=1}^{m} \mathbf{I}_k$
        \State Define acquisition for initial conditions:
        \Statex \quad \quad \quad \quad $\beta_{m}(\mathbf{x}_0) \gets \mathbb{E}_{\hat{f}}\left[\mathcal{M}(\mathbf{I}_{\Sigma} + \mathbf{I}_{\mathbf{x}_0}) - \mathcal{M}(\mathbf{I}_{\Sigma})\right]$
        \Statex \quad \quad \quad \quad where $\mathbf{I}_{\mathbf{x}_0}$ is the predicted FIM contribution from trajectory starting at $\mathbf{x}_0$
        \State Select next initial: $\mathbf{x}_1^{(m+1)} \gets \arg\max_{\mathbf{x}_0 \in \mathcal{X}} \beta_m(\mathbf{x}_0)$
        \State Initialize: $\mathcal{D}_{m+1} \gets \{(t_0, \mathbf{x}_1^{(m+1)})\}$
    \EndIf
\EndFor
\State \Return $\mathcal{D}$, $\hat{f} \gets \texttt{Ensemble-SINDy}(\mathcal{D})$
\end{algorithmic}
\end{algorithm}

The elegance of the entropy-based Bayesian model lies in the closed-format distribution functions for both priors and posteriors of the Gaussian process, especially when Gaussian noises are considered. And the limitations are also noticeable - Firstly, the input noise is assumed to be normally or uniformly distributed in order to have closed form kernel functions, so that more extreme cases (where some inputs are significantly more corrupt than others, i.e. their values are much more perturbed) are not considered. Secondly, the computational cost greatly increases for higher dimensional model, i.e. in the Hartmann 6-Dimensional function.

Sparsity-promoting algorithms like SINDy could account for the drawbacks above with its noise-robustness. And since entropy based acquisition functions can be used on the next evaluation point to maximize mutual information, it can also be used on searching for a new trajectory to maximize structural similarities. Fig~\ref{fig:search} shows the reduction in loss when a single step is performed (i.e. going from a randomly initialized trajectory to a slightly more informative trajectory)

\section{Conclusion and Discussions}

In this paper, we leverage the Fisher Information Matrix (FIM) within the data-driven framework of sparse identification of nonlinear dynamics (SINDy) to systematically quantify the informativeness of measurement data. By analyzing the FIM associated with both single trajectories and ensembles of trajectories, we demonstrate that information-theoretic analysis can significantly improve sampling efficiency and model performance by prioritizing data drawn from the most informative regions of state space and time. Furthermore, we show that Fisher information and entropy-based metrics can promote data efficiency for unknown dynamical systems across a range of practical scenarios, including cases where only a single trajectory is available, where a control parameter can be tuned, and where multiple trajectories with freely chosen initial conditions can be collected. These quantifiable information metrics enable more efficient sampling strategies that accelerate convergence while maintaining robustness, illustrating how information theory can be directly and effectively integrated into data-driven model discovery.

Several important directions remain for future investigation. Firstly, the determinant of the Fisher Information Matrix can be interpreted as the inverse determinant of the parameter covariance matrix, commonly referred to as the generalized variance \cite{wilks1932certain}. Within the framework of D-optimality, this quantity admits a clear geometric interpretation in terms of the volume of the parameter confidence ellipsoid and has been linked to model complexity and identifiability \cite{ly2017tutorial}. Incorporating an explicit penalization of model complexity into information-theoretic metrics therefore represents a promising direction, and systematic numerical experiments could be conducted to assess its effect on model selection, robustness, and overfitting behavior.

Secondly, the optimization of control parameters explored in Section 5.2 is driven purely by information-theoretic objectives. A complementary and widely used approach is model predictive control (MPC) \cite{lee2011model, mayne2014model}, which optimizes control inputs over a finite prediction horizon and has seen increasing adoption in nonlinear and learning-based settings \cite{bieker2020deep}. Integrating information-based criteria with MPC may yield further performance gains. For example, information metrics could be incorporated directly into the MPC cost function \cite{kaiser2018sparse}; however, enforcing convexity typically requires restricting attention to scalar summaries of the FIM, such as its largest eigenvalue. Leveraging the full spectral structure of the FIM would instead lead to nonconvex optimization and potential convergence to local minima. An alternative strategy is to evaluate multiple candidate forcing inputs at each timestep and aggregate them using weights informed by predictive performance, information gain, and model complexity.

Thirdly, in tasks involving neural networks - such as reconstruction problems under sparse sensing - recent studies suggest that random sensor placement may not significantly degrade reconstruction accuracy \cite{williams2024sensing}. This behavior may arise from redundancy in the latent space or from the availability of sufficiently rich time-delay embeddings despite a reduced number of sensors. Understanding how such redundancy interacts with information-theoretic sensing and sampling criteria remains an open question and may help clarify when principled sensor placement provides advantages over random strategies.

In the present study, Fisher information and entropy-based metrics can be used to retrospectively identify informative trajectories, temporal segments, and initial conditions, enabling substantial gains in data efficiency. In sensor networks, these ideas have been developed into sensor management frameworks that leverage entropy or mutual information for sensor selection and placement to maximize expected reduction in uncertainty and improve parameter inference~\cite{krause2008near,lee2021optimal}.

Therefore, an important direction for future work is the integration of information-guided sampling with adaptive, closed-loop sensing strategies on data assimilation models. Recent work on DA-SHRED~\cite{bao2025data} provides a foundation by combining data assimilation with learned latent representations for efficient reconstruction and model discovery in high-dimensional, spatio-temporal systems. While DA-SHRED currently operates under prescribed sensing layouts, its architecture naturally lends itself to the incorporation of information-theoretic criteria for adaptive sensor placement. Embedding Fisher information–based discriminative metrics into such data-assimilation frameworks represents a promising pathway toward fully autonomous, information-optimal model discovery under severe sensing constraints.

\section*{Acknowledgements}

We are especially indebted to Prof. Urban Fasel for invaluable conversations regarding the outlook and prospectus of this work.
This work was supported in part by the US National Science Foundation (NSF) AI Institute for Dynamical Systems (dynamicsai.org), grant 2112085. JNK further acknowledges support from the Air Force Office of Scientific Research  (FA9550-24-1-0141).

\printbibliography{}

\end{document}